\journal{Theoretical Computer Science}
\DeclareFontFamily{U}{stmry}{}
\DeclareFontShape{U}{stmry}{b}{n}
   {  <5> <6> <7> <8> <9> <10> gen * stmary
      <10.95><12><14.4><17.28><20.74><24.88>stmary10%
   }{}
\DeclareFontShape{U}{stmry}{m}{n}
   {  <5> <6> <7> <8> <9> <10> gen * stmary
      <10.95><12><14.4><17.28><20.74><24.88>stmary10%
   }{}
\newtheorem{theorem}{Theorem}[section]
\newtheorem{lemma}[theorem]{Lemma}
\newtheorem{corollary}[theorem]{Corollary}
\newdefinition{example}[theorem]{Example}
\newdefinition{definition}[theorem]{Definition}
\newdefinition{remark}[theorem]{Remark}
\newcommand\FV{\ensuremath{\mathsf{FV}}}
\newcommand\proofof[1]{\noindent\textcolor{darkgray}{\sffamily\bfseries Proof of #1.}}
\newcommand\titre[1]{\noindent\textcolor{darkgray}{\rm\sffamily\bfseries (#1)}}
\newcommand\s[1]{\ensuremath{s({#1})}}
\newcommand\conj[1]{\ensuremath{\mathsf{conj}(#1)}}
\newcommand\PF[1]{\ensuremath{\mathsf{PF}(#1)}}
\newcommand\OC{System I\xspace}
\newcommand\OCe{System I$^\eta$\xspace}
\newcommand\interp[1]{\ensuremath{\left\llbracket{#1}\right\rrbracket}}
\newcommand\SN{\ensuremath{\mathsf{SN}}}
\newcommand\eq{\ensuremath{\rightleftarrows}}
\newcommand\re{\ensuremath{\hookrightarrow}}
\newcommand\toreq{\ensuremath{\rightsquigarrow}}
\newcommand\basicr{\toreq_{\triangle}}
\newcommand\basicre{\re_{\triangle}}
\newcommand\ve[1]{#1}
\newcommand\cond[1]{\ensuremath{\scriptstyle[#1]\,}}
\newcommand\condi[2]{\ensuremath{{#1}{\vcenter{#2}}}}
\newcommand\rulelabel[1]{\mbox{\scriptsize\sc{#1}}}
\newcommand\V{\mathcal V}
\begin{document}

\begin{frontmatter}

  \title{Extensional proofs\\
    in a propositional logic modulo isomorphisms\tnoteref{funding}}
    \tnotetext[funding]{Partially funded by 
    PIP 11220200100368CO,
    PICT 2019-1272 and 2021-I-A-00090, 
    CSIC 22520220100073UD,
    and the French-Argentinian IRP SINFIN.}

\author[ICC,UNQ]{Alejandro D\'iaz-Caro}
\author[LMF]{Gilles Dowek}

\address[ICC]{ICC, CONICET--Universidad de Buenos Aires, Argentina}
\address[UNQ]{Departamento de Ciencia y Tecnolog\'ia, Universidad Nacional de Quilmes, Argentina}
\address[LMF]{Inria, LMF, ENS Paris-Saclay, Gif-sur-Yvette, France}

\begin{abstract}
\OC is a proof language for a fragment of propositional logic where isomorphic
propositions, such as $A\wedge B$ and $B\wedge A$, or $A\Rightarrow(B\wedge C)$
and $(A\Rightarrow B)\wedge(A\Rightarrow C)$ are made equal. \OC enjoys the
strong normalization property. This is sufficient to prove the existence of
empty types, but not to prove the introduction property (every closed term in
normal form is an introduction). Moreover, a severe restriction had to be made
on the types of the variables in order to obtain the existence of empty types.
We show here that adding $\eta$-expansion rules to \OC permits to drop this
restriction, and yields a strongly normalizing calculus which enjoys the full
introduction property.
\end{abstract}

\begin{keyword}
  Simply typed lambda calculus \sep
  Isomorphisms \sep
  Logic \sep
  Cut-elimination \sep
  Proof-reduction \sep
  Eta-expansion \sep
  Strong normalization
\MSC[2020] 03F05 \sep 03B40 \sep 03B38
\end{keyword}

\end{frontmatter}

\section{Introduction}

\subsection{Making connectives algebraic}

Operations in mathematics are often associative, commutative, equipped
with a neutral element, etc.  In contrast, the logical connectives
have no algebraic properties.  Of course, if the proposition $A \wedge
B$ has a proof, then so does the proposition $B \wedge A$, but if $\ve
r$ is a proof of $A \wedge B$, then it is not a proof of $B \wedge
A$. Thus, if we consider two propositions equal when they have the
same proofs, the propositions $A \wedge B$ and $B \wedge A$ are
different. This lack of algebraic properties of the connectives
distinguishes logic, among the mathematical theories.

Our project is to bring logic closer to algebra, by making, for example, the
conjunction commutative, that is the propositions $A \wedge B$ and $B
\wedge A$ equal. This extends the project of Martin-L\"of's type
theory \cite{MartinLof84}, the Calculus of Constructions
\cite{CoquandHuetIC88}, Deduction modulo theory
\cite{DowekHardinKirchnerJAR03,DowekWernerJSL98}, etc.  that makes
definitionally equivalent propositions equal.

The propositions $A \wedge B$ and $B \wedge A$ already are equal in
some models: in Boolean algebras, or in Heyting algebras, conjunction
and disjunction are associative and commutative, they are distributive
one over the other, etc. Thus, they are genuine algebraic
operations. In categorical models, the Cartesian product is
associative and commutative, but only modulo isomorphisms. And, on the
syntactic side, conjunction and disjunction are neither associative
nor commutative.

Our long term objective is to understand how identifying some
propositions impacts proof-theory: how must the notion of
proof-reduction, that of reducibility candidate, etc.  be modified.
To explore such questions, we start with a simple case: constructive
propositional logic with implication and conjunction.

\subsection{Logical isomorphism}

The first step in such a project is to understand which propositions
can be identified. An answer to this question is given by the notion of
logical isomorphism. Two propositions $C$ and $D$ are said to be
isomorphic when there exist proofs of $C \Rightarrow D$ and $D
\Rightarrow C$, whose composition, in both ways, is semantically
equivalent to the identity.  For instance, the propositions $A \wedge
B$ and $B \wedge A$ are isomorphic.

This notion of isomorphism has been studied by Solov'ev~\cite{SolovievJSM83}, in the context of category theory, in the calculus with conjunction, implication, and a constant true type $\top$. Solov'ev shows that all the realizations of formal combinations of objects using the functors corresponding to the conjunction (cartesian product) and the implication (hom) are isomorphic in all Cartesian closed categories if and only if some of their realizations in the category of finites sets are isomorphic. 
A syntactic decision algorithm for isomorphisms is also given, with an efficient algorithm in~\cite{GilZibinMSCS05}.

Latter, the notion of isomorphism was also studied by M.~Rittri
\cite{RittriCADE90}, who has shown that identifying isomorphic
propositions simplified the search for a lemma in a database of
mathematical results. Then, such isomorphisms, for different
constructive systems, have been characterized by K.~Bruce, G.~Longo,
and R.~Di Cosmo
\cite{BruceDiCosmoLongoMSCS92,DiCosmo95,DiCosmoMSCS05}. O.~Laurent has
then extended this characterization to classical logic
\cite{Laurent05b}.

In the case of constructive propositional logic with implication and
conjunction four isomorphisms can be considered.
\begin{align*}
A \wedge B                 &\equiv B \wedge A\\
A \wedge (B \wedge C)      &\equiv (A \wedge B) \wedge C\\
A \Rightarrow (B \wedge C) &\equiv (A \Rightarrow B) \wedge (A \Rightarrow C)\\
(A \wedge B) \Rightarrow C &\equiv A \Rightarrow B \Rightarrow C
\end{align*}

\subsection{Non-deterministic proof-reduction}

Another question that arises in such a project is that of the
determinism of proof-reduction. The first models of computations:
Turing machines, $\lambda$-calculus, etc. were often deterministic.
But, quickly, some non-deterministic variants were introduced.  This
non-determinism then became essential with the rise of quantum
computing and asynchronous parallel computing
\cite{BoudolIC94,BucciarelliEhrhardManzonettoAPAL12,deLiguoroPipernoIC95,DezaniciancagliniDeliguoroPipernoSIAM98,PaganiRonchidellaroccaFI10}.

In proof-languages, in contrast, the reduction is still deterministic.
Yet, there are several situations where non-determinism is natural.
For example, if we diagonalize
the conjunction, introducing a unary connective $\hat{\wedge}$ such
that $\hat{\wedge} A = A \wedge A$, then the introduction rule of the
conjunction becomes
$$\infer[\mbox{$\hat{\wedge}$-i}]
{\hat{\wedge} A}{A~~~A}$$
and its first elimination rule
$$\infer[\mbox{$\hat{\wedge}$-e1}]{A}{\hat{\wedge} A}$$
Then, the proof 
$$\infer[\mbox{$\hat{\wedge}$-e1}]{A}{ \infer[\mbox{$\hat{\wedge}$-i}]
  {\hat{\wedge} A}{\infer{A}{\pi_1}~~~\infer{A}{\pi_2}}}$$ reduces to
$\pi_1$. But, thanks to the diagonalization, it can also be reduced to
$\pi_2$.  As we shall see, making conjunction commutative introduces
non-determinism in a similar way.

\subsection{\texorpdfstring{\OC}{System I}}

\OC \cite{DiazcaroDowekFSCD19} is a first attempt to identify
isomorphic propositions in constructive propositional logic with
implication and conjunction.

The usual proof-language of this logic is simply typed lambda-calculus with
Cartesian product. In this calculus, the term $\lambda x^A.r \times \lambda x^A.
s$, where we write $u \times v$ for the pair of two terms $u$ and $v$, has type
$(A \Rightarrow B) \wedge (A \Rightarrow C)$. In \OC, as $(A \Rightarrow B)
\wedge (A \Rightarrow C)\ \equiv\ A \Rightarrow (B \wedge C)$, this term also
has type $A \Rightarrow (B \wedge C)$ and it can be applied to $\ve t$ of type
$A$, yielding the term $(\lambda x^A.r \times \lambda x^A.s) \ve t$ of type $B
\wedge C$. With the usual reduction rules of lambda-calculus with pairs, such a
mixed cut (an introduction followed by the elimination of another connective)
would be in normal form, but we also extended the reduction relation, with an
equation $(\lambda x^A.r \times \lambda x^A.s) \eq \lambda x^A.(r \times s)$,
so that this term can be
$\beta$-reduced, 
taking inspiration from rules well-known in the area
of program transformation, for instance in 
G.~R\'ev\'esz \cite{Revesz1992,Revesz1995}, K.~St{\o}vring
\cite{Stovring06extendingthe}, and others.

One of the difficulties in the design of \OC was the definition of the
elimination rule for the conjunction. We cannot use a rule like ``if $\ve
r:A\wedge B$ then $\pi_1(\ve r):A$''. Indeed, if $A$ and $B$ are two arbitrary
types, $\ve s$ a term of type $A$ and $\ve t$ a term of type $B$, then $\ve
s\times \ve t$ has both type $A\wedge B$ and type $B \wedge A$, thus $\pi_1(\ve
s \times \ve t)$ would have both type $A$ and type $B$. The solution is to
consider explicitly typed (Church style) terms, and parameterize the projection
by the type: if $\ve r:A\wedge B$ then $\pi_A(\ve r):A$ and the reduction rule
is then that $\pi_A(\ve s \times \ve t)$ reduces to $\ve s$ if $\ve s$ has type
$A$. Thus, $\pi$-reduction is type driven, and $\beta$-reduction as well.

This rule makes reduction non-deterministic. Indeed, in the particular
case where $A$ is equal to $B$, then both $\ve s$ and $\ve t$ have
type $A$ and $\pi_A(\ve s \times \ve t)$ reduces both to $\ve s$ and
to $\ve t$. Unlike in the lambda-calculus we cannot specify which
reduct we get, but in any case, we eventually get a term in normal form
of type $A$, that is a cut-free proof of $A$.  Therefore, \OC is a
non-deterministic calculus and our pair-construction operator $\times$
is also the parallel composition operator of a non-deterministic
calculus. More precisely, the non-determinism does not come from one
operator, but from the interaction of two operators, $\times$ and
$\pi$. In this respect, \OC is close to quantum and algebraic
$\lambda$-calculi~\cite{ArrighiDiazcaroLMCS12,ArrighiDiazcaroValironIC17,ArrighiDowekRTA08,
  ArrighiDowekLMCS17,VauxMSCS09,DiazcaroPetitWoLLIC12,DiazcaroDowekTPNC17,DiazcaroGuillermoMiquelValironLICS19}
where the non-determinism comes from the interaction of superposition
and projective measurement.

In~\cite{DiazcaroMartinezlopezIFL15}, we have implemented an early
version of System I, extended with general recursion. We showed with a
couple of examples, how this language can be helpful as a realistic
programming language. On the one hand, the language has a sort of
partial application, which can start the computation as soon as it
receives a parameter, in any order. On the other hand, the language
enables to reuse code by projecting functions and discarding unused
code prior to its usage. For example, a function calculating the
quotient and the rest of two natural numbers can be projected out into
a function calculating only the quotient, discarding the code
calculating the rest. Even in the case of general recursion and mutual
recursion, the language will unfold the recursion as needed to discard
the unused code.

\subsection{The drawbacks of \texorpdfstring{\OC}{System I}}

In \cite{DiazcaroDowekFSCD19} we succeeded in proving the strong
normalization and the consistency of \OC, that is, the existence of a
proposition that has no closed proof.

However, \OC still has some drawbacks.
\begin{itemize}
\item As the propositions $A \Rightarrow B \Rightarrow A$ and $B \Rightarrow A
  \Rightarrow A$ are isomorphic, the term $(\lambda x^A.\lambda y^B.x)r$ where $r$
  has type $B$ is well-typed, but it cannot be $\beta$-reduced. In \OC, this term
  is in normal form, so \OC does not verify the introduction property (a closed
  term in normal form is either an abstraction of a pair). Only when such a term
  is applied to a term $s$ of type $A$, to make a closed term of atomic type, it
  can be reduced: $(\lambda x^A.\lambda y^B.x)r s$, being equivalent to $(\lambda
  x^A.\lambda y^B.x) s r$, can be reduced to $(\lambda y^B.s) r$, and then to $s$.
  A solution has been explored in \cite{DiazcaroMartinezlopezIFL15}: ``delayed
  $\beta$-reduction'' that reduces $(\lambda x^A.\lambda y^B.x)r$ to $\lambda
  x^A.(\lambda y^B.x) r$ and then to $\lambda x^A.x$. A similar equivalence has
  been proposed before in the context of proof-nets~\cite{RegnierTCS94}.

\item As the types $(A \wedge B) \Rightarrow (A \wedge B)$ and $A \Rightarrow B
  \Rightarrow (A \wedge B)$ are isomorphic, the term $(\lambda x^{A \wedge B}.x)r$
  where $r$ has type $A$ is well-typed (of type $B \Rightarrow (A \wedge B)$), but
  it cannot be $\beta$-reduced as the term $r$ of type $A$, cannot be substituted
  for the variable $x$ of type $A \wedge B$. In \OC variables have so called
  ``prime types'', that is, types that do not contain a conjunction at head
  position. Thus, the above term can only be written as $(\lambda y^{A}.\lambda
  z^B.y \times z)r$, and it reduces to $\lambda z^B.r \times z$. Another
  possibility has been explored in \cite{DiazcaroMartinezlopezIFL15}: ``partial
  $\beta$-reduction'' that reduces directly $(\lambda x^{A \wedge B}.x)r$ to
  $\lambda z^B.r \times z$.
  It is interesting to remark that a similar notion to prime types has been already proposed by Solov'ev in~\cite{SolovievJSM83}.
\end{itemize}

\subsection{\texorpdfstring{\OCe}{System I-eta}}

In this paper, we show these two drawbacks are symptoms of the lack of
extensionality in \OC. This leads us to introduce the \OCe that
extends \OC with an $\eta$-expansion rule, and a surjective pairing
$\delta$-expansion rule.

In \OCe, the term $(\lambda x^A.\lambda y^B.x)r$ $\eta$-expands to $\lambda
x^A.(\lambda x^A.\lambda y^B.x)r x$, that is equivalent to $\lambda x^A.(\lambda
x^A.\lambda y^B.x) x r$, and reduces to $\lambda x^A.x$. In the same way, the
term $(\lambda x^{A \wedge B}.x)r$ $\eta$-expands to $\lambda y^B.(\lambda x^{A
\wedge B}.x)r y$, that is equivalent to $\lambda y^B.(\lambda x^{A \wedge
B}.x)(r \times y)$, and reduces to $\lambda y^B.r \times y$. This way, we do not
need to constrain variables to have prime types.
Dropping this restriction, makes the mixed cut $\pi_{(\tau \wedge \tau)
\Rightarrow \tau}(\lambda x^{\tau \wedge \tau}.x)$ well-typed, since $(A\wedge
B)\Rightarrow C$ is isomorphic to $A\Rightarrow B\Rightarrow C$ and variables
can have any type. However, using the $\delta$-rule this term expands to
$\pi_{(\tau \wedge \tau) \Rightarrow \tau}(\lambda x^{\tau \wedge
\tau}.\pi_{\tau} (x) \times \pi_{\tau}(x))$ that is equivalent to $\pi_{(\tau
\wedge \tau) \Rightarrow \tau} ((\lambda x^{\tau \wedge \tau}.\pi_{\tau} (x))
\times (\lambda x^{\tau \wedge \tau}.\pi_{\tau} (x)) )$, and reduces to $\lambda
x^{\tau \wedge \tau}.\pi_{\tau} (x)$ that is an introduction.

Designing \OCe yet led us to make a few choices. For instance, if the terms $r$
and $s$ are not introductions, then $(r \times s) t$, where $t$ has type $A$,
$\eta$-expands to $(\lambda x^A.(r x) \times \lambda x^A. (s x)) t$, that is
equivalent to $\lambda x^A.((r x) \times (s x)) t$ and $\beta$-reduces to $(r t)
\times (s t)$. But, if one of them is an abstraction on a type different from
$A$, then the term cannot be reduced. For instance $((\lambda x^{\tau
\Rightarrow \tau}.\lambda y^{\tau}.x) \times (\lambda y^{\tau}.y))t$, where $t$
is a term of type $\tau$, cannot be reduced. So we could either introduce a
symmetric rule to commute the two abstractions or introduce a distributivity
rule transforming the elimination $((\lambda x^{\tau \Rightarrow \tau}.\lambda
y^{\tau}.x) \times(\lambda y^{\tau}.y))t$ into the introduction $(\lambda
x^{\tau \Rightarrow \tau}.\lambda y^{\tau}.x) y \times (\lambda y^{\tau}.y)t$.
We have chosen the second option, as we favoured reduction over equivalence. But
both choices make sense.

Our main results are the normalization proof of \OCe, developing ideas from
\cite{DiazcaroDowekFSCD19,JayGhaniJFP95} and the introduction property,
showing that \OCe solves the problems of \OC.

\section{Type isomorphisms}\label{sec:eqTypes}

We first define the types and their equivalence, and state properties on this
relation. Some of these properties have been proved
in~\cite{DiazcaroDowekFSCD19}, and others are new.

\subsection{Types and isomorphisms}
Types are defined by the following grammar, where $\tau$ is the only atomic
type, $\Rightarrow$ is the constructor of the type of functions, and $\wedge$ is
the constructor of the type for pairs.
\[
  A\ =\ \tau~|~A\Rightarrow A~|~A\wedge A
\]

\begin{definition}[Size of a type]
  The size of a type is defined as usual by
  \begin{align*}
    \s{\tau} &= 1\\ 
    \s{A\Rightarrow B} &=\s{A}+\s{B}+1\\
    \s{A\wedge B} &=\s{A}+\s{B}+1
  \end{align*}
\end{definition}

\begin{definition}[Type equivalence]
  The equivalence between types is the smallest congruence such that:
  \begin{align*}
    A\wedge B		  &\equiv B\wedge A \\
    A\wedge (B\wedge C)	  &\equiv (A\wedge B)\wedge C \\
    A\Rightarrow(B\wedge C)  &\equiv (A\Rightarrow B)\wedge(A\Rightarrow C) \\
    (A\wedge B)\Rightarrow C &\equiv A\Rightarrow B\Rightarrow C
  \end{align*}
\end{definition}

\begin{remark}
This equivalence relation is decidable \cite[Theorem 6.4.5]{DiCosmo95}
as these equivalences can be oriented as rewrite rules yielding a
normal form modulo associativity and commutativity.  We also have
defined a notion of canonical form
in~\cite{DiazcaroMartinezlopezIFL15} to implement an earlier version
of \OC.  However, as different orientations can be chosen for
distributivity and curryfication we prefer, in this theoretical
presentation, to give the typing rules for the equivalence relation
and not for a specific choice of a canonical form.
\end{remark}

\subsection{Prime factors}
We recall a lemma proved in \cite{DiazcaroDowekFSCD19} stating that any type is
equivalent to a conjunction of prime types \cite{DiazcaroDowekFSCD19}.

This transformation of a type into a conjunction of prime types can be compared
to the transformation of a proposition as a conjunction of clauses, except that
we use the equivalence $\equiv$ and not logical equivalence.

\begin{definition}[Prime types]
  A prime type is a type of the form $C_1\Rightarrow\dots\Rightarrow
  C_n\Rightarrow\tau$, with $n\geq 0$.
\end{definition}

A prime type is equivalent to $(C_1\wedge\cdots\wedge C_n)\Rightarrow\tau$,
which is either equivalent to $\tau$ or to $C\Rightarrow\tau$, for some $C$. For
uniformity, we may write $\varnothing\Rightarrow\tau$ for $\tau$.
We prove that each type can be decomposed into a conjunction of prime types.
We use the notation $[A_i]_{i=1}^n$ for the multiset whose elements are
$A_1,\dots,A_n$, we write $\uplus$ for the union of multisets, and we write
$\conj{[A_i]_{i=1}^n}$ for $A_1\wedge\cdots\wedge A_n$. We write
$[A_1,\dots,A_n]\sim[B_1,\dots,B_m]$ if $n=m$ and $B_i\equiv A_i$.

\begin{definition}[Prime factors]
  The multiset of prime factors of a type $A$ is inductively defined as follows,
  with the convention that $A\wedge\varnothing=A$.
  \begin{align*}
    \PF\tau &=[\tau]\\
    \PF{A\Rightarrow B} &= [(A\wedge B_i)\Rightarrow\tau]_{i=1}^n \quad\textrm{ where }[B_i\Rightarrow\tau]_{i=1}^n=\PF B\\
    \PF{A\wedge B} &=\PF A\uplus\PF B
  \end{align*}
\end{definition}

\begin{lemma}\label{lem:eqConjPF}
  For all $A$, $A\equiv\conj{\PF A}$.
\end{lemma}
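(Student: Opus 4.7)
The plan is to argue by structural induction on $A$. The base case $A = \tau$ is immediate because $\PF{\tau} = [\tau]$ gives $\conj{\PF{\tau}} = \tau$. The conjunction case is also easy: if $A = A_1 \wedge A_2$, then by definition $\PF{A_1\wedge A_2} = \PF{A_1} \uplus \PF{A_2}$, and by the induction hypothesis $A_i \equiv \conj{\PF{A_i}}$, so $A_1 \wedge A_2 \equiv \conj{\PF{A_1}} \wedge \conj{\PF{A_2}} \equiv \conj{\PF{A_1} \uplus \PF{A_2}}$, the last equivalence following from associativity (and commutativity, if one cares about the multiset representation) of $\wedge$, which are available in $\equiv$.

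The real work is in the implication case $A = A_1 \Rightarrow A_2$. By the induction hypothesis applied to $A_2$, we have $A_2 \equiv \conj{\PF{A_2}}$. A small observation from the definition of $\PF{\cdot}$ (easily confirmed by a parallel induction) is that every element of $\PF{A_2}$ is a prime type, so we may write $\PF{A_2} = [B_i \Rightarrow \tau]_{i=1}^n$ with the convention that $B_i = \varnothing$ stands for a bare $\tau$. Hence
\[
  A_1 \Rightarrow A_2 \;\equiv\; A_1 \Rightarrow \bigl((B_1\Rightarrow\tau)\wedge\cdots\wedge(B_n\Rightarrow\tau)\bigr),
\]
using that $\equiv$ is a congruence, so one can substitute $A_2$ by $\conj{\PF{A_2}}$ under the implication.

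Next I would repeatedly apply the isomorphism $A\Rightarrow(B\wedge C) \equiv (A\Rightarrow B)\wedge(A\Rightarrow C)$ to push the outer $A_1 \Rightarrow (-)$ inside the conjunction, obtaining
\[
  \bigwedge_{i=1}^n \bigl(A_1 \Rightarrow B_i \Rightarrow \tau\bigr),
\]
and then the isomorphism $A\Rightarrow B\Rightarrow C \equiv (A\wedge B)\Rightarrow C$ in the other direction on each conjunct to reach $\bigwedge_{i=1}^n \bigl((A_1 \wedge B_i)\Rightarrow \tau\bigr) = \conj{\PF{A_1\Rightarrow A_2}}$, matching the definition of $\PF{A_1 \Rightarrow A_2}$.

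I expect the only subtlety — and the main (mild) obstacle — to be the bookkeeping around the convention $A \wedge \varnothing = A$ when some $B_i$ is empty, so that the conjunct $A_1 \Rightarrow B_i \Rightarrow \tau$ is really $A_1 \Rightarrow \tau$ and rewrites to $(A_1 \wedge \varnothing) \Rightarrow \tau = A_1 \Rightarrow \tau$ consistently with the definition of $\PF{A_1 \Rightarrow A_2}$. Apart from that, the proof is a direct structural induction using the four defining isomorphisms in the obvious order, together with the fact that $\equiv$ is a congruence.
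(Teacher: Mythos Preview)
Your proof is correct and follows essentially the same route as the paper's own proof: a structural induction on $A$ (the paper phrases it as induction on $s(A)$, but the inductive calls are the same), handling the three cases $\tau$, $A_1\wedge A_2$, and $A_1\Rightarrow A_2$ exactly as you describe, using congruence to replace $A_2$ by $\conj{\PF{A_2}}$, then distributing $A_1\Rightarrow(-)$ over the conjunction and re-currying each conjunct. Your remark about the $\varnothing$ convention is the only detail the paper leaves implicit.
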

\begin{proof}
 By induction on \s A.
 \begin{itemize}
 \item If $A=\tau$, then $\PF\tau=[\tau]$, and so $\conj{\PF\tau}=\tau$.
 \item If $A=B\Rightarrow C$, then $\PF A = [(B\wedge C_i)\Rightarrow\tau]_i$,
   where $[C_i\Rightarrow\tau]_i=\PF C$. By the i.h., $C\equiv\bigwedge_i(C_i\Rightarrow\tau)$,
   hence, $A = B\Rightarrow C\equiv B\Rightarrow\bigwedge_i (C_i\Rightarrow\tau)\equiv
   \bigwedge_i(B\Rightarrow C_i\Rightarrow\tau)\equiv\bigwedge_i((B\wedge C_i)\Rightarrow\tau)$.
 \item If $A=B\wedge C$, then $\PF{A}=\PF{B}\uplus\PF{C}$. By the i.h.,
   $B\equiv\conj{\PF B}$,
   and $C\equiv\conj{\PF C}$.
   Therefore,
   $A
   =B\wedge C
   \equiv \conj{\PF B}\wedge\conj{\PF C}\equiv\conj{\PF{B\wedge C}}
   \equiv\conj{\PF B\uplus\PF C}
   =\conj{\PF A}$.
   \qedhere
 \end{itemize}
\end{proof}

\begin{lemma}\label{lem:eqPF}
  If $A\equiv B$, then $\PF A\sim\PF B$.
\end{lemma}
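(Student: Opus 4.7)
The plan is to prove the lemma by induction on the derivation of $A \equiv B$, viewing $\equiv$ as the smallest congruence generated by the four axioms. Reflexivity, symmetry, and transitivity are immediate once one observes that $\sim$ is itself an equivalence on multisets (bijections compose and reverse, and $\equiv$ is transitive). For the four base axioms, I would simply compute $\PF$ on both sides: commutativity and associativity of $\wedge$ are inherited from $\uplus$; the distributivity axiom gives literally equal multisets $[(A \wedge D) \Rightarrow \tau]$ for $D \Rightarrow \tau$ ranging over $\PF B \uplus \PF C$; and the currying axiom yields primes $((A \wedge B) \wedge D) \Rightarrow \tau$ on the left and $(A \wedge (B \wedge D)) \Rightarrow \tau$ on the right, which are matched pointwise and related by associativity of $\wedge$.

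For the congruence rules, the $\wedge$-case is immediate from the induction hypothesis, since $\PF{A_1 \wedge B_1} = \PF{A_1} \uplus \PF{B_1}$ and $\sim$ is preserved by $\uplus$. The crux, and the main obstacle, is the $\Rightarrow$-congruence case: given $A_1 \equiv A_2$ and $B_1 \equiv B_2$, we need $\PF{A_1 \Rightarrow B_1} \sim \PF{A_2 \Rightarrow B_2}$. Unfolding, $\PF{A_i \Rightarrow B_i} = [(A_i \wedge D) \Rightarrow \tau]$ indexed by $D \Rightarrow \tau \in \PF{B_i}$. The induction hypothesis provides a bijection between $\PF{B_1}$ and $\PF{B_2}$ pairing primes with $D \Rightarrow \tau \equiv D' \Rightarrow \tau$; but to conclude $(A_1 \wedge D) \Rightarrow \tau \equiv (A_2 \wedge D') \Rightarrow \tau$ we actually need $D \equiv D'$, and the equivalence of the outer prime types does not obviously imply this equivalence of their ``interiors''.

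I would resolve this by strengthening the statement proved by induction: the bijection witnessing $\PF A \sim \PF B$ can be chosen so that each matched pair of prime factors $D \Rightarrow \tau$ and $D' \Rightarrow \tau$ satisfies the stronger property $D \equiv D'$, not merely $D \Rightarrow \tau \equiv D' \Rightarrow \tau$. This strengthened invariant is evident in each of the four base axioms (in particular, the currying axiom gives $(A \wedge B) \wedge D \equiv A \wedge (B \wedge D)$ pointwise), and it propagates through reflexivity, symmetry, transitivity, and both congruence rules. In the $\Rightarrow$-congruence step, the strengthened hypothesis directly yields $D \equiv D'$, whence $A_1 \wedge D \equiv A_2 \wedge D'$ by congruence of $\wedge$, so the invariant also holds for the new pairing. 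Since the strengthened statement trivially implies the lemma as stated, this completes the argument.
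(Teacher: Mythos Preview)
Your proof is correct and follows essentially the same outline as the paper's (very terse) argument: verify the four axioms, then propagate through the congruence closure. The paper phrases this as a structural induction on one-step equivalence followed by an induction on the length of the derivation, but the content is the same.

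However, the strengthening you introduce for the $\Rightarrow$-congruence case is unnecessary. From the induction hypothesis you already have $D\Rightarrow\tau \equiv D'\Rightarrow\tau$ and $A_1\equiv A_2$. Using the currying axiom and congruence of $\Rightarrow$,
\[
  (A_1\wedge D)\Rightarrow\tau
  \;\equiv\; A_1\Rightarrow(D\Rightarrow\tau)
  \;\equiv\; A_2\Rightarrow(D'\Rightarrow\tau)
  \;\equiv\; (A_2\wedge D')\Rightarrow\tau,
\]
so the weaker hypothesis $D\Rightarrow\tau \equiv D'\Rightarrow\tau$ already suffices; you do not need to track $D\equiv D'$ separately. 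Your strengthened invariant is valid and does go through, but the paper's direct argument avoids the extra bookkeeping.
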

\begin{proof}
  First we check that $\PF{A\wedge B}\sim\PF{B\wedge A}$ and similar for the
  other three isomorphisms. Then we prove by structural induction that if $A$ and
  $B$ are equivalent in one step, then $\PF A\sim\PF B$. We conclude by an
  induction on the length of the derivation of the equivalence $A\equiv B$.
\end{proof}

\subsection{Measure of types}

The size of a type is not preserved by equivalence. For instance,
$\tau\Rightarrow(\tau\wedge\tau) \equiv
(\tau\Rightarrow\tau)\wedge(\tau\Rightarrow\tau)$, but
$\s{\tau\Rightarrow(\tau\wedge\tau)}=5$ and
$\s{(\tau\Rightarrow\tau)\wedge(\tau\Rightarrow\tau)}=7$. Thus, we define
another notion of measure of a type, conforming the usual relation.

\begin{definition}[Measure of a type]\label{def:measure}
  The measure of a type is defined as follows
  \[
    m(A) = \sum_i(m(C_i)+1)\quad\textrm{ where }[C_i\Rightarrow\tau]=\PF A
  \]
  with the convention that $m(\emptyset)=0$.
\end{definition}

The following lemma states that the given measure conforms the usual relation.
\begin{lemma}\label{lem:comp}~
  \begin{enumerate}
  \item $m(A\wedge B)>m(A)$
  \item $m(A\Rightarrow B) >m(A)$
  \item $m(A\Rightarrow B) > m(B)$
  \item if $A\equiv B$, $m(A)=m(B)$
  \end{enumerate}
\end{lemma}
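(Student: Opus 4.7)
The plan is to handle the four claims in order, strengthening~(1) along the way to the equality $m(A\wedge B)=m(A)+m(B)$ and then reusing this identity throughout the rest of the proof. As a preliminary observation I would first note, by a straightforward structural induction on $A$, that the multiset $\PF A$ is never empty, so in particular $m(A)\geq 1$ for every type $A$.

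Part~(1) then follows directly from the defining clause $\PF{A\wedge B}=\PF A\uplus\PF B$, which unfolds into $m(A\wedge B)=m(A)+m(B)$; combined with $m(B)\geq 1$ this yields the strict inequality. For parts~(2) and~(3) I would expand $\PF{A\Rightarrow B}=[(A\wedge B_i)\Rightarrow\tau]_{i=1}^{n}$ with $n\geq 1$, so that, using the strong form of~(1), $m(A\Rightarrow B)=\sum_{i=1}^{n}(m(A)+m(B_i)+1)$. This sum is bounded below by $m(A)+1$, which proves~(2); and since $m(A)\geq 1$, it strictly exceeds $\sum_{i=1}^{n}(m(B_i)+1)=m(B)$, which proves~(3).

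Part~(4) is the main obstacle and where Lemma~\ref{lem:eqPF} enters crucially. I would first derive from the definition and~(1) the closed form $m(A\Rightarrow B)=|\PF B|\cdot m(A)+m(B)$, and then prove $m(A)=m(B)$ by induction on the derivation of $A\equiv B$. The commutativity and associativity axioms for $\wedge$ reduce to algebraic identities via additivity; the distributivity axiom $A\Rightarrow(B\wedge C)\equiv(A\Rightarrow B)\wedge(A\Rightarrow C)$ and the currying axiom $(A\wedge B)\Rightarrow C\equiv A\Rightarrow B\Rightarrow C$ both reduce to algebraic identities via the closed form. For the congruence rules, the $\wedge$ case is immediate from additivity; the $\Rightarrow$ case requires, in addition to the induction hypothesis $m(A)=m(B)$, the equality $|\PF A|=|\PF B|$, which is precisely what Lemma~\ref{lem:eqPF} supplies. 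This last point is the subtle step: without the cardinality information from Lemma~\ref{lem:eqPF}, the induction hypothesis alone does not let us propagate equality of $m$ through the codomain position of an implication.
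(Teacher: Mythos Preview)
Your argument is correct. Parts (1)--(3) essentially match the paper, which also relies on the decomposition $\PF{A\wedge B}=\PF A\uplus\PF B$ and the expansion of $\PF{A\Rightarrow B}$; you just make the additivity $m(A\wedge B)=m(A)+m(B)$ and the closed form $m(A\Rightarrow B)=|\PF B|\cdot m(A)+m(B)$ explicit, whereas the paper leaves them implicit.

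For part~(4) you take a genuinely different route. The paper proceeds by induction on $s(A)$: from Lemma~\ref{lem:eqPF} it obtains $\PF A\sim\PF B$, matches up the prime factors so that $C_i\equiv D_i$, and applies the induction hypothesis to conclude $m(C_i)=m(D_i)$, whence $m(A)=m(B)$. You instead do induction on the derivation of $A\equiv B$, discharging each axiom and congruence rule algebraically via the closed form; Lemma~\ref{lem:eqPF} enters only through its cardinality consequence $|\PF B|=|\PF{B'}|$ in the codomain congruence case for $\Rightarrow$. The paper's route is shorter and more conceptual once Lemma~\ref{lem:eqPF} is available; your route is more computational but self-contained at each step, and the closed form $m(A\Rightarrow B)=|\PF B|\cdot m(A)+m(B)$ is a useful byproduct.
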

\begin{proof}~
  \begin{enumerate}
  \item $\PF A$ is a strict submultiset of $\PF{A\wedge B}$.
  \item Let $\PF B=[C_i\Rightarrow\tau]_{i=1}^n$. Then, $\PF{A\Rightarrow
      B}=[(A\wedge C_i)\Rightarrow\tau]_{i=1}^n$. Hence, $m(A\Rightarrow B)\geq
    m(A\wedge C_1)+1>m(A\wedge C_1)\geq m(A)$.
  \item $m(A\Rightarrow B) = \sum_i m(A \wedge C_i) + 1 > \sum_i m(C_i) + 1 =
    m(B)$.
  \item By induction on $s(A)$. Let $\PF A = [C_i\Rightarrow\tau]_i$ and $\PF B
    = [D_j\Rightarrow\tau]_j$. By Lemma~\ref{lem:eqPF},
    $[C_i\Rightarrow\tau]_i\sim[D_i\Rightarrow\tau]_i$. Without lost of generality,
    take $C_i\equiv D_i$. By the induction hypothesis, $m(C_i)=m(D_i)$. Then,
    $m(A)=\sum_i(m(C_i)+1)=\sum_i(m(D_i)+1)=m(B)$.
    \qedhere
  \end{enumerate}
\end{proof}

\subsection{Decomposition properties on types}
In simply typed lambda calculus, the implication and the conjunction are
constructors, that is $A\Rightarrow B$ is never equal to $C\wedge D$, if
$A\Rightarrow B=A'\Rightarrow B'$, then $A=A'$ and $B=B'$, and the same holds
for the conjunction. This is not the case in \OCe, where $\tau\Rightarrow
(\tau\wedge\tau)\equiv (\tau\Rightarrow\tau)\wedge(\tau\Rightarrow\tau)$, but
the connectors still have some coherence properties, which are proved in this
section.

\begin{lemma}\label{lem:ImpConj}
  If $A\Rightarrow B\equiv C_1\wedge C_2$, then $C_1\equiv A\Rightarrow B_1$ and
  $C_2\equiv A\Rightarrow B_2$ where $B\equiv B_1\wedge B_2$.
\end{lemma}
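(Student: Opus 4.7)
The plan is to exploit the prime factor decomposition from the preceding subsection, which converts this structural coherence statement into a simple bookkeeping argument about multisets.

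First, I would compute both sides as multisets of prime factors. By Lemma~\ref{lem:eqPF}, $A \Rightarrow B \equiv C_1 \wedge C_2$ gives
\[
  \PF{A \Rightarrow B} \sim \PF{C_1} \uplus \PF{C_2}.
\]
Writing $\PF B = [D_i \Rightarrow \tau]_{i=1}^{n}$, the definition of $\PF$ yields $\PF{A \Rightarrow B} = [(A \wedge D_i) \Rightarrow \tau]_{i=1}^{n}$. The bijection witnessing $\sim$ then induces a partition $\{1,\dots,n\} = I_1 \uplus I_2$ such that
\[
  \PF{C_j} \sim [(A \wedge D_i) \Rightarrow \tau]_{i \in I_j} \qquad (j=1,2).
\]
(Both $I_1$ and $I_2$ are nonempty because every type has a nonempty prime factor list, as one sees by a quick induction on $\s{\cdot}$.)

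Second, I would reconstruct each $C_j$ using Lemma~\ref{lem:eqConjPF} and unfold the isomorphisms in the reverse direction:
\[
  C_j \equiv \conj{\PF{C_j}} \equiv \bigwedge_{i \in I_j} \bigl((A \wedge D_i) \Rightarrow \tau\bigr) \equiv \bigwedge_{i \in I_j} \bigl(A \Rightarrow D_i \Rightarrow \tau\bigr) \equiv A \Rightarrow \bigwedge_{i \in I_j} (D_i \Rightarrow \tau).
\]
So set $B_j := \bigwedge_{i \in I_j} (D_i \Rightarrow \tau)$. Then $C_j \equiv A \Rightarrow B_j$ by the above, and
\[
  B_1 \wedge B_2 \equiv \bigwedge_{i=1}^{n} (D_i \Rightarrow \tau) \equiv \conj{\PF B} \equiv B
\]
by Lemma~\ref{lem:eqConjPF} again, which is exactly the required decomposition $B \equiv B_1 \wedge B_2$.

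The only delicate point is the partition step: one must argue that each prime factor of $C_j$, being equivalent to some $(A \wedge D_i) \Rightarrow \tau$, indeed has a nonempty antecedent (so the grouping inside a single $A \Rightarrow (\cdot)$ works cleanly). This is immediate once one notes that the prime type $\tau$ has measure $1$ while each $(A \wedge D_i) \Rightarrow \tau$ has measure at least $2$, so by Lemma~\ref{lem:comp}(4) the cases cannot collapse. Everything else is routine manipulation of $\equiv$ using the four defining isomorphisms, and no induction on the derivation of $A \Rightarrow B \equiv C_1 \wedge C_2$ is needed: the prime-factor technology already packages all such reasoning.
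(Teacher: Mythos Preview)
Your argument is correct and is essentially the paper's own proof: both compute $\PF{A\Rightarrow B}=[(A\wedge D_i)\Rightarrow\tau]_i$, split this multiset between $\PF{C_1}$ and $\PF{C_2}$ via Lemma~\ref{lem:eqPF}, and then rebuild $C_j\equiv A\Rightarrow B_j$ and $B\equiv B_1\wedge B_2$ using Lemma~\ref{lem:eqConjPF} and the defining isomorphisms. Your closing ``delicate point'' is harmless but unnecessary, since the equivalence $(A\wedge D_i)\Rightarrow\tau\equiv A\Rightarrow(D_i\Rightarrow\tau)$ holds even when $D_i=\varnothing$ by the paper's conventions.
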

\begin{proof}
  By Lemma~\ref{lem:eqPF}, $\PF{A\Rightarrow B}\sim\PF{C_1\wedge
    C_2}=\PF{C_1}\uplus\PF{C_2}$. Let $\PF B=[D_i\Rightarrow\tau]_{i=1}^n$, so
  $\PF{A\Rightarrow B}=[( A\wedge D_i)\Rightarrow\tau]_{i=1}^n$. Without lost of
  generality, take $\PF{C_1}\sim[(A\wedge D_i)\Rightarrow\tau]_{i=1}^k$ and
  $\PF{C_2}\sim[(A\wedge D_i)\Rightarrow\tau]_{i=k+1}^n$. Therefore, by
  Lemma~\ref{lem:eqConjPF}, we have $A\Rightarrow
  B\equiv\bigwedge_{i=1}^k((A\wedge{D_i})\Rightarrow\tau)\wedge\bigwedge_{i=k+1}^n((A\wedge{D_i})\Rightarrow\tau)\equiv
  (A\Rightarrow\bigwedge_{i=1}^k
  (D_i\Rightarrow\tau))\wedge(A\Rightarrow\bigwedge_{i=k+1}^n
  (D_i\Rightarrow\tau))$. Take $B_1=\bigwedge_{i=1}^k{D_i}\Rightarrow\tau$ and
  $B_2=\bigwedge_{i=k+1}^n{D_i}\Rightarrow\tau$. Remark that $C_1 \equiv
  A\Rightarrow B_1$, $C_2\equiv A\Rightarrow B_2$ and $B\equiv B_1\wedge B_2$.
\end{proof}

\begin{lemma}
  \label{lem:eqConjN}
  If $A\wedge B\equiv C\wedge D$ then one of the following cases happens
  \begin{enumerate}
  \item $A\equiv C_1\wedge D_1$ and $B\equiv C_2\wedge D_2$, with $C\equiv
    C_1\wedge C_2$ and $D\equiv D_1\wedge D_2$.
  \item $B\equiv C\wedge D_2$, with $D\equiv A\wedge D_2$.
  \item $B\equiv C_2\wedge D$, with $C\equiv A\wedge C_2$.
  \item $A\equiv C\wedge D_1$, with $D\equiv D_1\wedge B$.
  \item $A\equiv C_1\wedge D$, with $C\equiv C_1\wedge B$.
  \item $A\equiv C$ and $B\equiv D$.
  \item $A\equiv D$ and $B\equiv C$.
  \end{enumerate}
\end{lemma}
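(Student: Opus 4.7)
The plan is to translate the hypothesis into a multiset equality on prime factors via Lemma~\ref{lem:eqPF}, and then reconstruct the seven cases from how $\PF A$ and $\PF B$ redistribute into $\PF C$ and $\PF D$.

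First, by Lemma~\ref{lem:eqPF}, $A\wedge B\equiv C\wedge D$ gives $\PF A\uplus\PF B\sim\PF C\uplus\PF D$. I would fix a matching witnessing this $\sim$ and use it to split $\PF A$ and $\PF B$ by where their elements go:
\[
  \PF A\sim\alpha_C\uplus\alpha_D,\quad \PF B\sim\beta_C\uplus\beta_D,\quad \PF C\sim\alpha_C\uplus\beta_C,\quad \PF D\sim\alpha_D\uplus\beta_D.
\]
Lemma~\ref{lem:eqConjPF} then yields $A\equiv\conj{\alpha_C}\wedge\conj{\alpha_D}$, $B\equiv\conj{\beta_C}\wedge\conj{\beta_D}$, $C\equiv\conj{\alpha_C}\wedge\conj{\beta_C}$, and $D\equiv\conj{\alpha_D}\wedge\conj{\beta_D}$, with the convention $A\wedge\varnothing=A$ handling any empty sub-multiset.

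The rest is a case split on which of the four sub-multisets are empty. Since $\PF C$ and $\PF D$ are nonempty, we cannot have both $\alpha_C=\beta_C=\varnothing$ nor both $\alpha_D=\beta_D=\varnothing$, leaving exactly seven patterns. If all four are nonempty we obtain case~1 with $C_1=\conj{\alpha_C}$, $C_2=\conj{\beta_C}$, $D_1=\conj{\alpha_D}$, $D_2=\conj{\beta_D}$. A single empty sub-multiset gives one of cases~2--5: $\alpha_C=\varnothing$ yields case~2 with $D_2=\conj{\beta_D}$; $\alpha_D=\varnothing$ yields case~3 with $C_2=\conj{\beta_C}$; $\beta_C=\varnothing$ yields case~4 with $D_1=\conj{\alpha_D}$; and $\beta_D=\varnothing$ yields case~5 with $C_1=\conj{\alpha_C}$. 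Finally, $\alpha_D=\beta_C=\varnothing$ forces $A\equiv C$ and $B\equiv D$ (case~6), while $\alpha_C=\beta_D=\varnothing$ forces $A\equiv D$ and $B\equiv C$ (case~7).

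The main obstacle is not the reasoning but the bookkeeping around empty sub-multisets, because the listed cases implicitly assume that each auxiliary type $C_i,D_j$ is a genuine type; the nonemptiness constraints on $\PF C$ and $\PF D$ ensure this, but the degenerate subcases must be handled separately rather than subsumed under case~1.
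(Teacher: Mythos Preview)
Your argument is correct and follows the same route as the paper: pass to prime-factor multisets via Lemma~\ref{lem:eqPF}, split into four sub-multisets, and case-analyse on emptiness (the paper constructs the four pieces via multiset intersection and difference rather than by fixing a matching, but this is cosmetic). One small slip: to land on exactly seven patterns you also need $\PF A$ and $\PF B$ nonempty, not just $\PF C$ and $\PF D$, so that $\alpha_C=\alpha_D=\varnothing$ and $\beta_C=\beta_D=\varnothing$ are excluded as well.
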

\begin{proof}
  Let $\PF A=R$, $\PF B=S$, $\PF C=T$, and $\PF D=U$. By Lemma~\ref{lem:eqPF},
  we have $R\uplus S\sim T\uplus U$. We prove first that there exist four
  multisets $V$, $W$, $X$, and $Y$ such that $R = V\uplus X$, $S = W\uplus Y$, $T
  = V\uplus W$, and $U = X\uplus Y$. Notice that $V$ and $X$ cannot be both empty,
  $W$ and $Y$ cannot be both empty, $V$ and $W$ cannot be both empty, and $X$ and
  $Y$ cannot be both empty.

  We have $T\uplus(S\cap U) = (T\uplus S)\cap (T\uplus U) \sim (T\uplus S)\cap
  (R\uplus S) = (T\cap R)\uplus S$. Thus, $T\setminus (T\cap R) \sim S\setminus
  (S\cap U)$. In the same way, $R\setminus (R\cap T) \sim U\setminus (S\cap U)$.
  We take $V=R\cap T$, $Y=S\cap U$, $W=T\setminus V\sim S\setminus Y$,
  $X=R\setminus V\sim U\setminus Y$.

  Now, if $V, W, X, Y$ are all non-empty, we let $C_1=\conj V$, $C_2=\conj W$,
  $D_1=\conj X$, and $D_2=\conj Y$, and we are in the first case.

  If $V$ is empty and the others are not, then we have $T=W$, $R=X$, so $A=\conj
  X$ and $C=\conj W$. We let $D_2=\conj Y$, hence we are in the second case.

  The cases where $W$, $X$, or $Y$ are empty, but the others are not, are
  symmetric.

  Finally, if $X$ and $W$ are both empty, then $A\equiv C$ and $B\equiv D$, and
  we are in the case 6. If $V$ and $Y$ are both empty, then $A\equiv D$ and
  $B\equiv C$, and we are in case 7.
\end{proof}

\begin{lemma}\label{lem:ImpImp}
  If $A\Rightarrow B\equiv C\Rightarrow\tau$, then either ($A\equiv C$ and
  $B\equiv\tau$), or ($C\equiv A\wedge B'$ and $B\equiv B'\Rightarrow\tau$).
\end{lemma}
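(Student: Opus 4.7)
Applying Lemma~\ref{lem:eqPF} to $A\Rightarrow B\equiv C\Rightarrow\tau$ gives $\PF{A\Rightarrow B}\sim\PF{C\Rightarrow\tau}$. Reading $\PF\tau=[\tau]$ as $[\varnothing\Rightarrow\tau]$, one has $\PF{C\Rightarrow\tau}=[C\Rightarrow\tau]$; and writing $\PF B=[D_i\Rightarrow\tau]_{i=1}^n$ one has $\PF{A\Rightarrow B}=[(A\wedge D_i)\Rightarrow\tau]_{i=1}^n$. Matching against a singleton forces $n=1$; writing $D=D_1$, the $\sim$-matching supplies $(A\wedge D)\Rightarrow\tau\equiv C\Rightarrow\tau$, and Lemma~\ref{lem:eqConjPF} supplies $B\equiv D\Rightarrow\tau$.

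I would then split on whether $D=\varnothing$. If so, $B\equiv\tau$ and the equivalence specialises to $A\Rightarrow\tau\equiv C\Rightarrow\tau$; the remaining goal is $A\equiv C$, the first alternative of the lemma. Otherwise I set $B'=D$: then $B\equiv B'\Rightarrow\tau$ and $(A\wedge B')\Rightarrow\tau\equiv C\Rightarrow\tau$, and the remaining goal becomes $C\equiv A\wedge B'$, the second alternative. In both sub-cases the residual task has the uniform shape ``$X\Rightarrow\tau\equiv Y\Rightarrow\tau$ implies $X\equiv Y$''.

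The main obstacle is this last extraction, because a second, naive application of Lemma~\ref{lem:eqPF} only yields singleton prime-factor multisets whose $\sim$-matching reproduces the hypothesis. I would therefore isolate it as a companion claim and prove it jointly with the lemma, by induction on $m(X)$. The base case $X\equiv\tau$ is handled by Lemma~\ref{lem:comp}(4), which forces $m(Y)=m(X)=1$ and hence $Y\equiv\tau$. When $X\equiv X_1\wedge X_2$ with $m(X_1),m(X_2)\geq 1$, the currying isomorphism rewrites $X\Rightarrow\tau$ as $X_1\Rightarrow(X_2\Rightarrow\tau)$; applying the main lemma to $X_1\Rightarrow(X_2\Rightarrow\tau)\equiv Y\Rightarrow\tau$ (its first alternative ruled out since $m(X_2\Rightarrow\tau)>1$) produces some $B''$ with $X_2\Rightarrow\tau\equiv B''\Rightarrow\tau$ and $Y\equiv X_1\wedge B''$, after which the companion applied inductively on the strictly smaller pair $(X_2,B'')$ gives $X_2\equiv B''$ and hence $Y\equiv X$. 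The case where $X$ is equivalent to a single prime type $E\Rightarrow\tau$ is treated analogously, descending one more level in the prime-factor decomposition. The real technical work is the organisation of this mutual recursion on a well-founded measure (for example a lexicographic combination of $m$ with the height of the prime-factor tree) so that every recursive call is strictly smaller.
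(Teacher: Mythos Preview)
Your opening analysis---computing $\PF{A\Rightarrow B}$ and $\PF{C\Rightarrow\tau}$, matching against the singleton to get $n=1$, and splitting on whether $D=\varnothing$---is exactly the paper's argument. The divergence is at the step you flag as the ``main obstacle'': from $[(A\wedge B_1)\Rightarrow\tau]\sim[C\Rightarrow\tau]$ the paper writes $A\wedge B_1\equiv C$ directly and is done. This is justified by reading Lemma~\ref{lem:eqPF} in its natural stronger form: if you inspect its proof, each of the four base isomorphisms and each congruence step matches prime factors with \emph{equivalent domains}, not merely with equivalent prime types (for three of the four isomorphisms the prime-factor multisets are literally equal; for currying the domains differ only by associativity of $\wedge$). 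Equivalently, one may take $\sim$ on multisets of shape $[C_i\Rightarrow\tau]_i$ to mean a bijection with $C_i\equiv D_{\sigma(i)}$; the proof of Lemma~\ref{lem:eqPF} goes through verbatim and the present lemma follows in one line.

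Your proposed alternative---a mutual induction between the lemma and the companion claim---does not terminate as set up. In the step where $X\equiv X_1\wedge X_2$ you invoke the main lemma on the instance $A=X_1$, $B=X_2\Rightarrow\tau$, $C=Y$; but by your own reduction of the main lemma, that instance needs the companion claim on $(A\wedge D,\,C)=(X_1\wedge X_2,\,Y)$, and $m(X_1\wedge X_2)=m(X)$ since $X\equiv X_1\wedge X_2$. This is a tight loop: the two calls live in the same $\equiv$-class of the first argument, so no lexicographic refinement by ``height of the prime-factor tree'' (or any other invariant of the $\equiv$-class) can separate them. The clean fix is not a more elaborate recursion but the strengthening of Lemma~\ref{lem:eqPF} above, after which the extraction $X\Rightarrow\tau\equiv Y\Rightarrow\tau\ \Rightarrow\ X\equiv Y$ is immediate and the rest of your plan (which matches the paper's) goes through.
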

\begin{proof}
  By Lemma~\ref{lem:eqPF}, $\PF{A\Rightarrow
    B}\sim\PF{C\Rightarrow\tau}=[C\Rightarrow\tau]$. Let $\PF
  B=[B_i\Rightarrow\tau]_{i=1}^n$. Then $\PF{A\Rightarrow B}=[(A\wedge
  B_i)\Rightarrow\tau]_{i=1}^n$. Therefore, $n=1$ and $A\wedge B_1\equiv C$. If
  $B_1=\varnothing$, then $A\equiv C$ and $B\equiv\tau$. If $B_1\neq\varnothing$,
  then $A\wedge B_1\equiv C$ and $B\equiv B_1\Rightarrow\tau$.
\end{proof}

\begin{lemma}\label{lem:WedgeWedgeSameLeft}
  If $A\wedge B\equiv A\wedge C$, then $B\equiv C$.
\end{lemma}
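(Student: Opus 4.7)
The plan is to lift the statement to the level of prime factors, where $\equiv$ becomes multiset equality (modulo $\equiv$ on primes), and there cancellation is routine. Concretely, I would start from Lemma~\ref{lem:eqPF} applied to the hypothesis $A\wedge B\equiv A\wedge C$, which yields $\PF{A\wedge B}\sim\PF{A\wedge C}$. Unfolding the definition of $\PF$ on a conjunction, this is exactly
\[
  \PF A\uplus\PF B\ \sim\ \PF A\uplus\PF C.
\]

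Next I would argue that $\sim$ can be cancelled on the common summand $\PF A$: if one views a multiset of prime types as a multiset of $\equiv$-equivalence classes of prime types, then $\sim$ becomes plain multiset equality, for which cancellation $R\uplus S=R\uplus T\Rightarrow S=T$ is standard (by induction on $|R|$, removing one class at a time). Applying this to $R=\PF A$ yields $\PF B\sim\PF C$. Finally, Lemma~\ref{lem:eqConjPF} gives $B\equiv\conj{\PF B}$ and $C\equiv\conj{\PF C}$; since the equivalence $\sim$ pairs up the factors up to $\equiv$ and $\wedge$ is associative and commutative for $\equiv$, we obtain $\conj{\PF B}\equiv\conj{\PF C}$, and therefore $B\equiv C$.

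The only non-mechanical step is the cancellation of $\PF A$ under $\sim$. The mild subtlety is that $\sim$ is defined via a bijection respecting $\equiv$, not via strict equality of multisets, so one needs to be a bit careful when a prime of $\PF A$ on the left-hand side is matched via $\sim$ with a prime living in $\PF C$ (rather than in the copy of $\PF A$) on the right-hand side. Passing to $\equiv$-classes sidesteps this entirely: in the quotient, $\PF A$ is a genuine common submultiset of the two sides, and cancellation is immediate. Aside from checking that this reformulation faithfully captures $\sim$, the proof is short, so I expect no real obstacle.
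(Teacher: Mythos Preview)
Your proposal is correct and follows exactly the same route as the paper: apply Lemma~\ref{lem:eqPF} to obtain $\PF A\uplus\PF B\sim\PF A\uplus\PF C$, cancel the common $\PF A$, and conclude via Lemma~\ref{lem:eqConjPF}. If anything, you are more careful than the paper, which simply asserts the cancellation step without discussing the subtlety about $\sim$ versus literal multiset equality that you address by passing to $\equiv$-classes.
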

\begin{proof}
  By Lemma~\ref{lem:eqPF}, $\PF{A\wedge
    B}=\PF{A}\uplus\PF{B}\sim\PF{A}\uplus\PF{C}=\PF{A\wedge C}$. Then $\PF B\sim\PF
  C$, and so, by Lemma~\ref{lem:eqConjPF}, $B\equiv C$.
\end{proof}

\begin{lemma}\label{lem:ImpImpSamePremise}
  If $A\Rightarrow B\equiv A\Rightarrow C$, then $B\equiv C$.
\end{lemma}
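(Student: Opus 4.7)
My plan is to mirror the proof of Lemma~\ref{lem:WedgeWedgeSameLeft}, this time unfolding the definition of $\PF$ through the implication. I will write $\PF{B} = [B_i\Rightarrow\tau]_{i=1}^n$ and $\PF{C} = [C_j\Rightarrow\tau]_{j=1}^m$, so that $\PF{A\Rightarrow B} = [(A\wedge B_i)\Rightarrow\tau]_{i=1}^n$ and $\PF{A\Rightarrow C} = [(A\wedge C_j)\Rightarrow\tau]_{j=1}^m$. The hypothesis together with Lemma~\ref{lem:eqPF} forces these two multisets to be $\sim$-equivalent, yielding $n=m$ and, up to a reindexing, $(A\wedge B_i)\Rightarrow\tau \equiv (A\wedge C_i)\Rightarrow\tau$ for every $i$.

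Next I apply Lemma~\ref{lem:ImpImp} pointwise. Since both codomains are $\tau$, the second alternative of that lemma is untenable (it would require $\tau \equiv B'\Rightarrow\tau$ for some $B' \neq \varnothing$, which fails by a prime-factor count), so I obtain $A\wedge B_i \equiv A\wedge C_i$. Passing from this to $B_i \equiv C_i$ is where the real work sits. In the principal case, where $B_i$ and $C_i$ are both non-empty, Lemma~\ref{lem:WedgeWedgeSameLeft} cancels the $A$ directly. In the edge case $B_i = \varnothing$, the convention $A\wedge\varnothing = A$ reduces the equivalence to $A \equiv A\wedge C_i$; a further application of Lemma~\ref{lem:eqPF} compares multiset cardinalities and forces $C_i = \varnothing$ as well, since $\PF{C_i}$ would otherwise be a non-empty addition to $\PF A$. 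The symmetric case is identical.

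Having obtained $B_i \equiv C_i$ for every $i$, we have $\PF{B} \sim \PF{C}$, and Lemma~\ref{lem:eqConjPF} concludes $B \equiv \conj{\PF B} \equiv \conj{\PF C} \equiv C$. I expect the main obstacle to be precisely this $\varnothing$ bookkeeping: Lemma~\ref{lem:WedgeWedgeSameLeft} needs an actual binary conjunction on each side, so one must justify separately that $B_i$ and $C_i$ are simultaneously empty or simultaneously non-empty before the cancellation step can be invoked.
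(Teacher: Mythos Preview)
Your proof is correct and follows essentially the same route as the paper's: compute prime factors on both sides, invoke Lemma~\ref{lem:eqPF} to match them up to permutation, apply Lemma~\ref{lem:ImpImp} pointwise to get $A\wedge B_i\equiv A\wedge C_i$, cancel with Lemma~\ref{lem:WedgeWedgeSameLeft}, and reassemble via Lemma~\ref{lem:eqConjPF}. The paper's version is terser and simply writes ``by Lemma~\ref{lem:ImpImp}, $A\wedge B_i\equiv A\wedge C_i$, so, by Lemma~\ref{lem:WedgeWedgeSameLeft}, $B_i\equiv C_i$'' without pausing on the $\varnothing$ bookkeeping; your added case analysis for $B_i=\varnothing$ is a legitimate refinement of a step the paper leaves implicit.
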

\begin{proof}
  Let $\PF{A\Rightarrow B} = [(A\wedge B_i)\Rightarrow\tau]_{i=1}^n$, with
  $[B_i\Rightarrow\tau]_{i=1}^n=\PF B$, and $\PF{A\Rightarrow C} = [(A\wedge
  C_i)\Rightarrow\tau]_{i=1}^m$, with $[C_i\Rightarrow\tau]_{i=1}^n=\PF C$. By
  Lemma~\ref{lem:eqPF}, $n=m$ and, without lost of generality, we can consider
  that $(A\wedge B_i)\Rightarrow\tau\equiv (A\wedge C_i)\Rightarrow\tau$. Then, by
  Lemma~\ref{lem:ImpImp}, $A\wedge B_i\equiv A\wedge C_i$, so, by
  Lemma~\ref{lem:WedgeWedgeSameLeft}, $B_i\equiv C_i$. Therefore, by
  Lemma~\ref{lem:eqConjPF}, $B\equiv (B_1\Rightarrow\tau)\wedge\cdots\wedge
  (B_n\Rightarrow\tau)\equiv (C_1\Rightarrow\tau)\wedge\cdots\wedge
  (C_n\Rightarrow\tau)\equiv C$.
\end{proof}

\section{The \texorpdfstring{\OCe}{System I-eta}}\label{sec:calculus}
\subsection{Syntax}
We associate to each type $A$ (up to equivalence) an infinite set of variables
$\V_A$ such that if $A\equiv B$ then $\V_A=\V_B$ and if $A\not\equiv B$ then
$\V_A\cap\V_B=\varnothing$. The set of preterms is defined by
\[
   r\ =\ x~|~\lambda x.r~|~ rr~|~r\times r~|~\pi_A(r)
\]
These terms are called respectively, variables, abstractions, applications,
products and projections. An introduction is either an abstraction or a product.
An elimination is either an application or a projection. We recall the type on
binding occurrences of variables and write $\lambda x^A.t$ for $\lambda x.t$
when $x\in\V_A$. The set of free variables of $r$ is written $\FV(r)$.
$\alpha$-equivalence and substitution are defined as usual. The type system is
given in Table~\ref{tab:typeSys}. We use a presentation of typing rules without
explicit context
following~\cite{GeuversKrebbersMcKinnaWiedijkLFMTP10,ParkSeoParkLeeJAR13}, hence
the typing judgments have the form $ r:A$. The well-typed preterms are called
terms.

\begin{table}[t]
  \[
    \begin{array}{c@{\qquad}c@{\qquad}c}
      \condi{\cond{x\in\V_A}}{\infer[^{(ax)}]{x:A}{\phantom{x:A}}}
      &
        \condi{\cond{A\equiv B}}{\infer[^{(\equiv)}]{ r:B}{ r:A}}
      &
        \infer[^{(\Rightarrow_i)}]{\lambda x^A. r:A\Rightarrow B}{ r:B}
      \\[1em]
      {\infer[^{(\Rightarrow_e)}]{ r s:B}{ r:A\Rightarrow B &  s:A}}
      &
        \infer[^{(\wedge_i)}]{ r\times s:A\wedge B}{ r:A &  s:B}
      &
        \infer[^{(\wedge_e)}]{\pi_A( r):A}{ r:A\wedge B}
    \end{array}
  \]
  \caption{The type system.}
  \label{tab:typeSys}
\end{table}

\subsection{Operational semantics}\label{sec:opSem}
The operational semantics of the calculus is defined by two relations: an
equivalence relation, and a reduction relation.

\begin{definition}
  The symmetric relation $\eq$ is the smallest contextually closed relation
  defined by the rules given in Table~\ref{tab:opSemSym}.
\end{definition}

\begin{table}[t]
  \begin{align*}
    r\times  s &\eq  s\times  r & \rulelabel{(comm)} \\
    ( r\times  s)\times  t &\eq  r\times ( s\times  t) & \rulelabel{(asso)}\\
    \lambda x^A.( r\times  s) &\eq \lambda x^A. r\times \lambda x^A. s & \rulelabel{(dist)} \\
    r s t &\eq  r( s\times  t) & \rulelabel{(curry)}
  \end{align*}
  \caption{Symmetric relation.}
  \label{tab:opSemSym}
\end{table}

Because of the associativity property of $\times$, the term $ r\times (\ve
s\times t)$ is equivalent to the term $( r\times s)\times t$, so we can just
write it $ r\times s\times t$.

The size of a term $S(r)$, defined, as usual, by $S(x)=1$, $S(\lambda x^A.r) =
S(\pi_A(r))=1+S(r)$, $S(rs) = S(r\times s)=1+S(r)+S(s)$, is not invariant
through the equivalence $\eq$. Hence, we introduce a measure $M(\cdot)$ (given
in Table~\ref{tab:PandM}) which relies on a measure $P(\cdot)$ counting the
number of pairs in a term.

\begin{table}
  \[
    \begin{array}{rl|rl}
      P(x) &= 0& 
                 M(x) &= 1\\ 
      P(\lambda x^A.r) &= P(r)& 
                                M(\lambda x^A.r) &= 1 + M(r) + P(r)\\ 
      P(rs) &= 0& 
                  M(rs) &= 1 + M(r) + M(s)\\ 
      P(r \times s) &= 1 + P(r) + P(s)& 
                                        M(r \times s) &= 1 + M(r) + M(s)\\ 
      P(\pi_A(r)) &= 0 &
                         M(\pi_A(r)) &= 1 +  M(r)
    \end{array}
  \]
  \caption{Measure on terms.}
  \label{tab:PandM}
\end{table}

\begin{lemma}
  \label{lem:eqM}
  If $ r \eq s$ then
  $M(r) = M(s)$.
\end{lemma}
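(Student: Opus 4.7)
The plan is to proceed by induction on the derivation of $r \eq s$, but strengthening the induction hypothesis: I would prove jointly that $r \eq s$ implies both $M(r) = M(s)$ \emph{and} $P(r) = P(s)$. This strengthening is essential because $M(\lambda x^A.r) = 1 + M(r) + P(r)$, so lifting the statement through an abstraction context requires equality on $P$ as well as on $M$. Without the joint hypothesis, the induction would break precisely at the $\lambda$-case of contextual closure.

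For the base cases, I would check each of the four rules of Table~\ref{tab:opSemSym} directly from the definitions of $P$ and $M$ in Table~\ref{tab:PandM}. The rules \rulelabel{(comm)} and \rulelabel{(asso)} are immediate from commutativity and associativity of addition. The rule \rulelabel{(curry)} is also immediate since $P$ of an application is always $0$, and both sides unfold to $1 + M(r) + M(s) + M(t) + 1$ for $M$ and $0$ for $P$. The rule \rulelabel{(dist)} is the most informative: both sides yield $3 + M(r) + M(s) + P(r) + P(s)$ for $M$ and $1 + P(r) + P(s)$ for $P$, which is where the contribution of the $+P(r)$ term in the definition of $M$ on abstractions exactly compensates the disappearance of one pair when distribution flips the order of $\lambda$ and $\times$.

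For the inductive step, I would appeal to contextual closure. Looking at Table~\ref{tab:PandM}, both $P$ and $M$ of a compound preterm are defined purely in terms of $P$ and $M$ of its immediate subterms, with no dependence on any other syntactic data. Consequently, if the induction hypothesis supplies $M(r) = M(s)$ and $P(r) = P(s)$, then for every one-hole context $C[\cdot]$ built from variables, abstractions, applications, products, and projections, we get $M(C[r]) = M(C[s])$ and $P(C[r]) = P(C[s])$ by a trivial case analysis on the outermost constructor of $C$. Combining the base cases with this compositional lifting closes the induction and yields $M(r) = M(s)$, which is what the lemma asserts.

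The only real obstacle is conceptual rather than computational: recognising that the statement must be strengthened to carry $P$ along with $M$ through the induction. Once this is done, the argument is a routine verification.
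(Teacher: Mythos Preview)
Your proof is correct and follows the same approach as the paper: verify the four base rules of Table~\ref{tab:opSemSym} directly, then handle contextual closure by structural induction. Your explicit strengthening to carry $P(r)=P(s)$ along with $M(r)=M(s)$ is a detail the paper leaves implicit (it simply writes ``conclude by structural induction to handle the contextual closure''), but you are right that it is needed for the $\lambda$-context case, since $M(\lambda x^A.r)=1+M(r)+P(r)$.
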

\begin{proof}
  First, we check the case of each rule of Table~\ref{tab:opSemSym}, and then
  conclude by structural induction to handle the contextual closure.
  \begin{itemize}
  \item \rulelabel{(comm)}:
    \( M( r \times s) =
    1 + M( r) + M( s) = M( s \times r). \)

  \item \rulelabel{(asso)}:
    \( M(( r \times s) \times t) = 2 + M( r) + M( s) + M( t) =
    M( r \times ( s \times t)). \)

  \item \rulelabel{(dist)}:
    \(
    M(\lambda x^A . ( r \times s))
    = 3 + M( r) + M( s) + P( r) + P( s)
    = M(\lambda x^A .  r \times \lambda x^A .  s)
    \)

  \item \rulelabel{(curry)}:
    \(
    M(r  s  t)
    = 2 + M( r)+M( s)+M( t)
    = M( r( s \times t))
    \)
    \qedhere
  \end{itemize}
\end{proof}

\begin{lemma}
  \label{lem:finiteClasses}
  For any term $ r$, the set $\{ s~|~ s \eq^* r\}$ is finite (modulo
  $\alpha$-equivalence).
\end{lemma}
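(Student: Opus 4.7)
The plan is to bound the syntactic size of every term in $\{s \mid s \eq^* r\}$, then argue there are only finitely many such terms modulo $\alpha$-equivalence.

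First, I would show that $S(t) \le M(t)$ for every term $t$, by structural induction. The only mildly non-trivial case is the abstraction: $S(\lambda x^A.u) = 1 + S(u) \le 1 + M(u) \le 1 + M(u) + P(u) = M(\lambda x^A.u)$, using $P(u) \ge 0$, which follows from a trivial induction on $u$. Combined with Lemma~\ref{lem:eqM}, this gives $S(s) \le M(s) = M(r) =: N$ for every $s \eq^* r$, so the tree sizes are uniformly bounded.

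Next, I would check that two further data are invariant under $\eq$: the set of free variables and the set of type annotations (viewed as equivalence classes modulo $\equiv$) appearing in a term. By direct inspection of the four axioms in Table~\ref{tab:opSemSym}, none introduces a new annotation or a new free variable. For example, \rulelabel{(comm)}, \rulelabel{(asso)}, and \rulelabel{(curry)} only rearrange subterms without touching annotations; in \rulelabel{(dist)} the same $A$ occurs on both sides, and the free-variable sets coincide via $(\FV(u)\cup\FV(v))\setminus\{x\} = (\FV(u)\setminus\{x\})\cup(\FV(v)\setminus\{x\})$. A routine induction on contextual and transitive closure then extends invariance to $\eq^*$.

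Finally, I would fix a canonical finite pool of bound-variable names: for each of the finitely many equivalence classes of types occurring in $r$, choose $N$ fresh names in the corresponding $\V_A$. Any $s \eq^* r$ can be $\alpha$-renamed so that its bound occurrences use only these canonical names, its free variables lie in the fixed finite set $\FV(r)$, its type annotations come from a fixed finite set of classes, and its tree has at most $N$ nodes. The number of such labelled trees is finite, which proves the lemma. There is no real obstacle here beyond the $\alpha$-renaming bookkeeping; the argument could also be phrased using de Bruijn indices, bypassing the naming question entirely.
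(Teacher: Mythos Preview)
Your proof is correct and takes the same approach as the paper: bound $S(s)$ by the $\eq$-invariant measure $M(r)$, note that $\FV$ is preserved, and conclude by counting terms of bounded size over a finite alphabet. You are more explicit than the paper about the type-annotation bookkeeping (the paper simply asserts that $\{s \mid \FV(s)\subseteq\FV(r),\ S(s)\le M(r)\}$ is finite without mentioning that the set of $\pi_A$-labels and binding types is also $\eq$-invariant, which is needed since there are infinitely many types), but the argument is otherwise identical.
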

\begin{proof}
  Let $F=\FV(r)$ and $n=M(r)$. We have $\{ s~|~ s \eq^* r\} \subseteq \{
  s~|~\FV( s) = F~\mbox{and}~M( s) = n\} \subseteq \{ s~|~\FV( s) \subseteq
  F~\mbox{and}~S(s) \leq n\}$. Hence, it is finite.
\end{proof}

\begin{definition}
  The reduction relation $\re$ is given in Table~\ref{tab:opSem}.
    This Table must be read in three steps: first we define the relation
    $\re_{\beta\pi\zeta}$, then the relation
    $\re_{\eta\delta}$, and finally the relations
    $\re$ and $\basicre$ in a mutually dependent way. Like in \cite{JayGhaniJFP95} this relation
    $\basicre$ forbids $\eta$-expansions and $\delta$-expansion and
    is used to reduce terms that are the left part of an application
    or the body of a projection.
\end{definition}

Since, in \OCe, an abstraction can be equivalent to a product, a subterm can
neither be $\eta$-expanded nor $\delta$-expanded, if it is either an abstraction
or a product, or if it occurs at left of an application or in the body of a
projection~\cite{DicosmoKesnerMSCS94}.

\begin{definition}
  We write $\toreq$ for the relation $\re$ modulo $\eq^*$ (i.e. $ r\toreq s$ iff
  $\ve r\eq^* r'\re s'\eq^* s$), and $\toreq^*$ for its transitive and reflexive
  closure.
  We write $t\basicr t'$ for the relation $\basicre$ modulo $\eq^*$ (i.e.
  $r\basicr s$ iff $r\eq^*r'\basicre s'\eq^*s$).
\end{definition}

\begin{remark}\label{rmk:finite}
  By Lemma~\ref{lem:finiteClasses}, a term has a finite number of one-step reducts
  and these reducts can be computed.
\end{remark}

Finally, notice that unlike in \OC, the $\zeta$-rule transforming an elimination
into an introduction is a reduction rule and not an equivalence rule. Hence,
variables, applications, and projections are preserved by $\eq$. In contrast, an
abstraction can be equivalent to a product, but, introductions are preserved.

\begin{table}[t]
  \begin{align*}
    \mbox{If } s:A,\  (\lambda x^A. r) s &\re_{\beta\pi\zeta}  r[ s/x] & \rulelabel{($\beta$)}\\
    \mbox{If } r:A,\ \pi_A( r\times s) &\re_{\beta\pi\zeta}  r & \rulelabel{($\pi$)}\\
    (r\times s)t&\re_{\beta\pi\zeta} rt\times st &\rulelabel{($\zeta$)}\\
    \mbox{\parbox{5.7cm}{\centering
    If $r:A\Rightarrow B$, $x$ fresh,\\
    and $r$ is an elimination or a variable,
    }}\ r &\re_{\eta\delta}\lambda x^A.(rx) &\rulelabel{($\eta$)}\\
    \mbox{\parbox{5.7cm}{\centering
    If $r:A\wedge B$ \\
    and $r$ is an elimination or a variable,
    }}\ r &\re_{\eta\delta}\pi_A(r)\times\pi_B(r)  &\rulelabel{($\delta$)}
  \end{align*}
  \[
    \infer{r\basicre s}{r\re_{\beta\pi\zeta} s}
    \qquad
    \infer{r\re s}{r\re_{\eta\delta} s}
    \quad
    \infer{r\re s}{r\basicre s}
    \qquad
    \infer{\lambda x.r\basicre\lambda x.s}{r\re s}
    \qquad
    \infer{rt\basicre st}{r\basicre s}
  \]
  \[
    \infer{tr\basicre ts}{r\re s}
    \qquad
    \infer{r\times t \basicre s\times t}{r\re s}
    \qquad
    \infer{t\times r\basicre t\times s}{r\re s}
    \qquad
    \infer{\pi_A(r)\basicre\pi_A(s)}{r\basicre s}
  \]
	\caption{Reduction relation.}
  \label{tab:opSem}
\end{table}

\section{Subject Reduction}\label{sec:SR}
The set of types assigned to a term is preserved under $\eq$ and $\re$. Before
proving this property, we prove the unicity of types (Lemma~\ref{lem:unicity}),
the generation lemma (Lemma~\ref{lem:generation}), and the substitution lemma
(Lemma~\ref{lem:substitution}). 

\begin{lemma}[Unicity]\label{lem:unicity}
  If $ r:A$ and $ r:B$, then $A\equiv B$.
\end{lemma}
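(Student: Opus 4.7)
The plan is to prove the lemma by induction on the structure of the preterm $r$, exploiting the fact that the only non syntax-directed rule is $(\equiv)$, which does not modify the subject of the judgement. Concretely, for each shape of $r$ I would argue that any derivation of $r:A$ can be rearranged so that the last ``creative'' rule is determined by the head constructor of $r$, and all subsequent steps are applications of $(\equiv)$; equivalently, I would show that the set of types assignable to $r$ forms a single equivalence class.

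For $r=x$, the axiom $(ax)$ forces $x\in\V_A$ and $x\in\V_B$, and the disjointness condition on $\{\V_C\}$ modulo $\equiv$ immediately yields $A\equiv B$. For $r=\lambda x^C.t$, any type of $r$ must be $\equiv$-equivalent to some $C\Rightarrow D$ with $t:D$; the induction hypothesis gives $D\equiv D'$ for two such witnesses, hence $A\equiv C\Rightarrow D\equiv C\Rightarrow D'\equiv B$. For $r=s\times t$ the argument is analogous, using the congruence of $\equiv$ to combine $E\equiv E'$ and $F\equiv F'$ into $E\wedge F\equiv E'\wedge F'$. For $r=\pi_C(t)$, the elimination rule forces the type to be $\equiv C$, so both $A$ and $B$ are equivalent to $C$ and we are done.

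The delicate case is the application $r=st$. From the derivations we extract witnesses $s:F\Rightarrow E$, $t:F$ with $A\equiv E$, and $s:F'\Rightarrow E'$, $t:F'$ with $B\equiv E'$. The induction hypothesis on $t$ yields $F\equiv F'$, and the induction hypothesis on $s$ yields $F\Rightarrow E\equiv F'\Rightarrow E'$. Combining these by congruence gives $F\Rightarrow E\equiv F\Rightarrow E'$, and here is where I would invoke Lemma~\ref{lem:ImpImpSamePremise} (the ``left-cancellation'' for $\Rightarrow$) to conclude $E\equiv E'$, whence $A\equiv B$. This step is the main obstacle, since without left-cancellation one cannot separate the argument type from the return type after equivalence.

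A minor technical point worth mentioning is that the above reasoning on ``the last creative rule'' is really an induction on the height of the derivation, nested inside the induction on $r$: given a derivation of $r:A$, either its last rule is the syntax-directed one (producing a type of the expected shape) or it is $(\equiv)$, in which case the sub-derivation yields a type $A'\equiv A$ by the inner induction hypothesis and we chain equivalences. This observation handles all cases uniformly and does not require the generation lemma, which is indeed proved only afterwards.
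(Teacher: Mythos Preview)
Your proof is correct and the core argument coincides with the paper's: once the $(\equiv)$ rules are stripped away, the remaining rule is determined by the head constructor of $r$, and one concludes via the induction hypothesis on subterms. The paper organises this as a single induction on the (combined) derivation length, whereas you use an outer structural induction on $r$ with an inner derivation-height induction; the reorganisation is cosmetic.

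Where you are actually more careful than the paper is the application case. The paper writes only that ``all the remaining cases are syntax directed'', but as you rightly point out, from $F\Rightarrow A\equiv F'\Rightarrow B$ and $F\equiv F'$ one cannot read off $A\equiv B$ without Lemma~\ref{lem:ImpImpSamePremise}. That lemma is indeed established earlier in Section~\ref{sec:eqTypes}, so the dependency is legitimate and your explicit invocation of it fills a step the paper leaves tacit.
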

\begin{proof}~
  \begin{itemize}
  \item If the last rule of the derivation of $ r:A$ is $(\equiv)$, then we have
    a shorter derivation of $r:C$ with $C\equiv A$, and, by the i.h., $C\equiv B$,
    hence $A\equiv B$.
  \item If the last rule of the derivation of $ r:B$ is $(\equiv)$ we proceed in
    the same way.
  \item All the remaining cases are syntax directed.
    \qedhere
  \end{itemize}
\end{proof}

\begin{lemma}[Generation]
  \label{lem:generation}
  ~
  \begin{enumerate}
  \item\label{case:var} If $x\in\V_A$ and $x:B$, then $A\equiv B$.
  \item\label{case:lambda} If $\lambda x^A. r:B$, then $B\equiv A\Rightarrow C$
    and $ r:C$.
  \item If $ r s:B$, then $ r:A\Rightarrow B$ and $ s:A$.
  \item If $ r\times s:A$, then $A\equiv B\wedge C$ with $ r:B$ and $ s:C$.
  \item If $\pi_A( r):B$, then $A\equiv B$ and $ r:B\wedge C$.
  \end{enumerate}
\end{lemma}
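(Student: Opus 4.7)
The proof goes by induction on the length of the typing derivation of the given judgment, treated clause by clause. For each clause the last typing rule applied is either the unique syntax-directed rule whose conclusion matches the root constructor of the subject, or the non-syntax-directed rule $(\equiv)$. In the syntax-directed case the statement is read off directly from the premises of that rule; in the $(\equiv)$ case one appeals to the induction hypothesis on the shorter derivation of $r:D$ with $D\equiv B$, and uses transitivity together with the congruence of $\equiv$ to transport the result to the outer type.

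Clauses~\ref{case:var}, \ref{case:lambda}, and~4 are essentially immediate. For clause~\ref{case:var}, the only non-$(\equiv)$ rule that can conclude $x:B$ is $(ax)$, which forces $x\in\V_B$; together with the hypothesis $x\in\V_A$ and the disjointness $\V_A\cap\V_B=\varnothing$ whenever $A\not\equiv B$, this yields $A\equiv B$. For clauses~\ref{case:lambda} and~4, the introduction rules $(\Rightarrow_i)$ and $(\wedge_i)$ expose the expected shape of the result type and the types of the components with no further work, and in the $(\equiv)$ subcase the induction hypothesis combined with transitivity of $\equiv$ does the rest.

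Clauses~3 and~5 require one small extra step in the $(\equiv)$ subcase, where the type equivalence has to be propagated to the type of a subterm. For clause~3, from $rs:D$ with $D\equiv B$ the induction hypothesis supplies some $A$ with $r:A\Rightarrow D$ and $s:A$; since $\equiv$ is a congruence, $A\Rightarrow D\equiv A\Rightarrow B$, and one further application of $(\equiv)$ upgrades the derivation to $r:A\Rightarrow B$, while $s:A$ is preserved. Clause~5 is handled analogously: from $\pi_A(r):D$ with $D\equiv B$ the induction hypothesis gives $A\equiv D$ and $r:D\wedge C$; congruence then yields $A\equiv B$ and $D\wedge C\equiv B\wedge C$, and a final $(\equiv)$ step produces $r:B\wedge C$.

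The only delicate point in the argument is this uniform absorption of trailing $(\equiv)$ rules, which could in principle be interleaved arbitrarily with the syntax-directed rules; it is handled entirely by the congruence of $\equiv$, so no structural property of types beyond those established in Section~\ref{sec:eqTypes} is needed.
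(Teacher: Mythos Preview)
Your proof is correct and follows essentially the same approach as the paper: induction on the typing derivation, with a case split between the syntax-directed rule and the trailing $(\equiv)$ rule. You even spell out the congruence step needed in clauses~3 and~5 to transport the outer equivalence into the subterm's type, which the paper leaves implicit under ``the three other statements are similar.''
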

\begin{proof}
  Each statement is proved by induction on the typing derivation. For the
  statement~\ref{case:var}, we have $x\in\V_A$ and $x:B$. The only way to type
  this term is either by the rule $(ax)$ or $(\equiv)$.
  \begin{itemize}
  \item In the first case, $A=B$, hence $A\equiv B$.
  \item In the second case, there exists $B'$ such that $x:B'$ has a shorter
    derivation, and $B\equiv B'$. By the i.h. $A\equiv B'\equiv B$.
  \end{itemize}
  For the statement~\ref{case:lambda}, we have $\lambda x^A.\ve r:B$. The only
  way to type this term is either by rule $(\Rightarrow_i)$, $(\equiv)$.
  \begin{itemize}
  \item In the first case, we have $B=A\Rightarrow C$ for some, $C$ and $ r:C$.
  \item In the second, there exists $B'$ such that $\lambda x^A.\ve r:B'$ has a
    shorter derivation, and $B\equiv B'$. By the i.h., $B'\equiv A\Rightarrow C$ and
    $ r:C$. Thus, $B\equiv B'\equiv A\Rightarrow C$.
  \end{itemize}
  The three other statements are similar.
\end{proof}

\begin{lemma}[Substitution]
  \label{lem:substitution}
  If $ r:A$, $ s:B$, and $x\in\V_B$, then $ r[ s/x]:A$.
\end{lemma}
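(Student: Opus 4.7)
The plan is to proceed by structural induction on the preterm $r$, exploiting the Generation Lemma (Lemma~\ref{lem:generation}) at each step to unpack the derivation of $r:A$ rather than inducting directly on the typing derivation. This keeps the case analysis syntax-directed and avoids dealing with intervening uses of the $(\equiv)$ rule, which can be handled uniformly at the end of each case.

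For the variable case, $r = y$: if $y \neq x$, then $r[s/x] = y$, and the typing $y:A$ is preserved. If $y = x$, then because $x \in \V_B$, Lemma~\ref{lem:generation}(\ref{case:var}) applied to $x:A$ gives $A \equiv B$; since $s:B$, rule $(\equiv)$ yields $s:A$, i.e.\ $r[s/x]:A$. For the abstraction case $r = \lambda y^C. r'$ (with $y$ chosen fresh w.r.t.\ $x$ and $\FV(s)$), Lemma~\ref{lem:generation}(\ref{case:lambda}) gives $A \equiv C \Rightarrow D$ with $r':D$; the induction hypothesis yields $r'[s/x]:D$, then $(\Rightarrow_i)$ gives $\lambda y^C. r'[s/x] : C \Rightarrow D$, and $(\equiv)$ concludes $r[s/x]:A$.

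The application, product, and projection cases follow the same recipe: use the generation lemma to expose the premises (for products and projections this produces an equivalence $A \equiv C \wedge D$ or $A \equiv C$ respectively), invoke the induction hypothesis on each immediate subterm, reassemble with the corresponding introduction/elimination rule, and close with $(\equiv)$ if needed. Note that the types of the subterms obtained from the generation lemma are only determined up to $\equiv$, but this causes no difficulty because $\V_B$ depends only on the equivalence class of $B$, so the hypothesis $x \in \V_B$ is stable under $\equiv$.

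I do not expect a serious obstacle here: the only point requiring care is the variable case $y=x$, where the matching of the type $A$ assigned to $x$ with the declared type $B$ of $s$ relies on the generation lemma (and ultimately on unicity up to $\equiv$). Once that is in place, each remaining case is a routine application of the induction hypothesis together with the appropriate typing rule and a final use of $(\equiv)$.
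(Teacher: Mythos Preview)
Your proposal is correct and follows essentially the same approach as the paper: structural induction on $r$, using the Generation Lemma to unpack the typing in each case, applying the induction hypothesis to the subterms, and reassembling with the appropriate typing rule plus a final $(\equiv)$. The case analysis and the handling of the variable case via generation (yielding $A\equiv B$) match the paper's proof exactly.
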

\begin{proof}
  By structural induction on $ r$.
  \begin{itemize}
  \item Let $ r=x$. By Lemma~\ref{lem:generation}, $A\equiv B$, thus $ s:A$. We
    have $x[ s/x]= s$, so $x[ s/x]:A$.

  \item Let $ r=y$, with $y\neq x$. We have $y[ s/x]=y$, so $y[ s/x]:A$.

  \item Let $ r=\lambda y^C. r'$. By Lemma~\ref{lem:generation}, $A\equiv
    C\Rightarrow D$, with $ r':D$. By the i.h., $r'[ s/x]:D$, and so, by rule
    $(\Rightarrow_i)$, $\lambda y^C. r'[ s/x]:C\Rightarrow D$. Since $\lambda y^C.
    r'[ s/x]=(\lambda y^C. r')[ s/x]$, using rule $(\equiv)$, $(\lambda y^C. r')[
    s/x]:A$.

  \item Let $ r= r_1 r_2$. By Lemma~\ref{lem:generation}, $ r_1:C\Rightarrow A$
    and $ r_2:C$. By the i.h.~$ r_1[ s/x]:C\Rightarrow A$ and $ r_2[ s/x]:C$, and
    so, by rule $(\Rightarrow_e)$, $( r_1[ s/x])( r_2[ s/x]):A$. Since $(\ve r_1[
    s/x])( r_2[ s/x])=( r_1 r_2)[ s/x]$, we have $( r_1 r_2)[ s/x]:A$.

  \item Let $ r= r_1\times r_2$. By Lemma~\ref{lem:generation}, $ r_1:A_1$ and $
    r_2:A_2$, with $A\equiv A_1\wedge A_2$. by the i.h. $\ve r_1[ s/x]:A_1$ and $
    r_2[ s/x]:A_2$, and so, by rule $(\wedge_i)$, $( r_1[ s/x])\times ( r_2[\ve
    s/x]):A_1\wedge A_2$. Since $( r_1[ s/x])\times ( r_2[\ve s/x])=( r_1\times
    r_2)[ s/x]$, using rule $(\equiv)$, we have $( r_1\times r_2)[ s/x]:A$.

  \item Let $ r=\pi_A( r')$. By Lemma~\ref{lem:generation}, $ r':A\wedge C$.
    Hence, by the i.h., $ r'[ s/x]:A\wedge C$. Hence, by rule $\wedge_e$, $\pi_A(
    r'[ s/x]):A$. Since $\pi_A( r'[\ve s/x])=\pi_A( r')[ s/x]$, we have $\pi_A(
    r')[\ve s/x]:A$.
    \qedhere
  \end{itemize}
\end{proof}

\begin{theorem}[Subject reduction]\label{thm:SR}
  If $ r:A$ and $ r\re s$ or $ r\eq s$ then $ s:A$.
\end{theorem}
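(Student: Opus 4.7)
The plan is to prove subject reduction by induction on the derivation of $r \re s$ (respectively $r \eq s$), checking each base rule of Tables~\ref{tab:opSemSym} and~\ref{tab:opSem} and then propagating through the contextual closures. The main tools are the generation and unicity lemmas already proved, together with the substitution lemma and the type decomposition results of Section~\ref{sec:eqTypes}. The $(\equiv)$ rule of the type system will be used systematically at the end of each case to convert the freshly assigned type of the reduct back into the original type $A$.

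For the equivalence $\eq$, the rules \rulelabel{(comm)} and \rulelabel{(asso)} are immediate from Lemma~\ref{lem:generation} applied to $\times$: we extract the typed components, re-pair them, and use the congruence of $\equiv$ on $\wedge$. The rule \rulelabel{(dist)} requires applying generation twice to $\lambda x^A.(r\times s)$ to obtain a type of the form $A\Rightarrow(E\wedge F)$ with $r:E$ and $s:F$, then invoking the distributivity isomorphism to retype $\lambda x^A.r\times\lambda x^A.s$. Symmetrically, for \rulelabel{(curry)}, given $rst:C$, generation yields $r:E\Rightarrow D\Rightarrow C$, $s:E$, $t:D$; the currying isomorphism retypes $r$ as $(E\wedge D)\Rightarrow C$, and applying it to $s\times t$ gives $r(s\times t):C$.

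For the reduction $\re$, the expansion rules $(\eta)$ and $(\delta)$ are the easiest: they wrap $r$ with an abstraction or a pair of projections whose derived type coincides, by construction, with the original type of $r$. The rule $(\pi)$ uses the side condition $r:A$ together with Lemma~\ref{lem:unicity} to pick the correct component. The rule $(\zeta)$ is the first delicate case: from $(r\times s)t:C$, generation gives $r\times s:D\Rightarrow C$ and $t:D$, and a further generation decomposes $D\Rightarrow C\equiv E\wedge F$ with $r:E$, $s:F$; here Lemma~\ref{lem:ImpConj} is essential to obtain $E\equiv D\Rightarrow C_1$ and $F\equiv D\Rightarrow C_2$ with $C\equiv C_1\wedge C_2$, so that $rt\times st$ has type $C$. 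The rule $(\beta)$ is the other delicate case: from $(\lambda x^A.r)s:C$, generation yields $\lambda x^A.r:D\Rightarrow C$ and $s:D$, a second generation gives $D\Rightarrow C\equiv A\Rightarrow E$ with $r:E$, the side condition $s:A$ combined with Lemma~\ref{lem:unicity} forces $A\equiv D$, Lemma~\ref{lem:ImpImpSamePremise} then yields $E\equiv C$, and the substitution lemma (Lemma~\ref{lem:substitution}) closes the case. Finally, the contextual closure rules are handled by a routine outer induction: in each case, generation peels off the outermost constructor, the induction hypothesis retypes the reduced subterm, and the rule is re-applied. The main obstacle throughout is the careful bookkeeping of type isomorphisms in the $(\beta)$ and $(\zeta)$ cases, where the decomposition lemmas of Section~\ref{sec:eqTypes} do the real work.
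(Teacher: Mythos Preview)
Your proposal is correct and follows essentially the same route as the paper: induction on the derivation of the rewrite step, using the generation lemma to decompose the typing of the redex, the substitution lemma for \rulelabel{($\beta$)}, Lemma~\ref{lem:ImpConj} for \rulelabel{($\zeta$)}, and a straightforward induction for the contextual closures. The only notable difference is that in the \rulelabel{($\beta$)} case you make the appeal to unicity and to Lemma~\ref{lem:ImpImpSamePremise} explicit, whereas the paper writes directly ``$\lambda x^B.r : B\Rightarrow A$'' and ``$r:A$'' from generation, leaving those two steps implicit; your version is in fact the cleaner account of what is going on there.
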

\begin{proof}
  By induction on the rewrite relation.
  \begin{itemize}
  \item \rulelabel{(comm)}: 
    If $ r\times s:A$, then by Lemma~\ref{lem:generation}, $A\equiv A_1\wedge
    A_2\equiv A_2\wedge A_1$, with $ r:A_1$ and $ s:A_2$. Then, $s\times r:A_2\wedge
    A_1\equiv A$.
  \item \rulelabel{(asso)}: 
    ~
    \begin{description}
    \item[$(^{\to})$] If $( r\times s)\times t:A$, then by
      Lemma~\ref{lem:generation}, $A\equiv (A_1\wedge A_2)\wedge A_3\equiv
      A_1\wedge(A_2\wedge A_3)$, with $ r:A_1$, $\ve s:A_2$ and $ t:A_3$. Then, $
      r\times ( s\times t):A_1\wedge(A_2\wedge A_3)\equiv A$.
    \item[$(_{\leftarrow})$] Analogous to $(^{\to})$.
    \end{description}
  \item \rulelabel{(dist)}:~
    \begin{description}
    \item[$(^{\to})$] If $\lambda x^B.( r\times s):A$, then by
      Lemma~\ref{lem:generation}, we have $A\equiv (B\Rightarrow(C_1\wedge C_2))\equiv
      ((B\Rightarrow C_1)\wedge(B\Rightarrow C_2))$, with $ r:C_1$ and $ s:C_2$. Then,
      ${\lambda x^B. r\times \lambda x^B. s:(B\Rightarrow C_1)\wedge(B\Rightarrow
        C_2)}\equiv A$.
    \item[$(_{\leftarrow})$] If $\lambda x^B. r\times \lambda x^B. s:A$, then by
      Lemma~\ref{lem:generation}, $A\equiv((B\Rightarrow C_1)\wedge(B\Rightarrow
      C_2))\equiv (B\Rightarrow(C_1\wedge C_2))$, with $ r:C_1$ and $ s:C_2$. Then,
      ${\lambda x^B.( r\times \ve s):B\Rightarrow(C_1\wedge C_2)}\equiv A$.
    \end{description}
  \item \rulelabel{(curry)}:~ 
    \begin{description}
    \item[$(^{\to})$] If $ r s t:A$, then by Lemma~\ref{lem:generation}, $
      r:B\Rightarrow C\Rightarrow A\equiv(B\wedge C)\Rightarrow A$, $ s:B$ and $ t:C$.
      Then, $r(s\times t):A$.
    \item[$(_{\leftarrow})$] If $ r( s\times t):A$, then by
      Lemma~\ref{lem:generation}, $ r:(B\wedge C)\Rightarrow A\equiv (B\Rightarrow
      C\Rightarrow A)$, $ s:B$ and $ t:C$. Then $rst:A$.
    \end{description}

  \item \rulelabel{($\beta$)}: 
    If $(\lambda x^B. r) s:A$, then by Lemma~\ref{lem:generation}, $\lambda x^B.
    r:B\Rightarrow A$, and by Lemma~\ref{lem:generation} again, $ r:A$. Then by
    Lemma~\ref{lem:substitution}, $ r[ s/x^B]:A$.

  \item \rulelabel{($\pi$)}: 
    If $\pi_B( r\times s):A$, then by Lemma~\ref{lem:generation}, $A\equiv B$,
    and so, by rule $(\equiv)$, $ r:A$.

  \item \rulelabel{($\zeta$)}:~ 
    If $( r\times s) t:A$, then by Lemma~\ref{lem:generation}, $ r\times
    s:B\Rightarrow A$, and $ t:B$. Hence, by Lemma~\ref{lem:generation} again,
    $B\Rightarrow A\equiv C_1\wedge C_2$, and so by Lemma~\ref{lem:ImpConj},
    $A\equiv A_1\wedge A_2$, with $ r:B\Rightarrow A_1$ and $ s:B\Rightarrow A_2$.
    Then, $ r t\times s t:A_1\wedge A_2\equiv A$.

  \item\rulelabel{$(\eta)$}: If $r:A\Rightarrow B$, then, by rules
    $(\Rightarrow_e)$ and $(\Rightarrow_i)$, $\lambda x^A.(rx):A\Rightarrow B$.
  \item\rulelabel{$(\delta)$}: If $r:A\wedge B$, then by rules $(\wedge_e)$ and
    $(\wedge_i)$, $\pi_A(r)\times\pi_B(r):A\wedge B$.
    
  \item Contextual closure: Let $ t\to r$, where $\to$ is either $\eq$ or $\re$.
    \begin{itemize}
    \item Let $\lambda x^B. t\to\lambda x^B. r$: If $\lambda x^B. t:A$, then by
      Lemma~\ref{lem:generation}, $A\equiv (B\Rightarrow C)$ and $ t:C$, hence by the
      i.h., $ r:C$ and so $\lambda x^B.\ve r:B\Rightarrow C\equiv A$.
    \item Let $ t s\to r s$: If $ t s:A$ then by Lemma~\ref{lem:generation}, $
      t:B\Rightarrow A$ and $ s:B$, hence by the i.h., $\ve r:B\Rightarrow A$ and so $
      r s:A$.
    \item Let $ s t\to s t$: If $ s t:A$ then by Lemma~\ref{lem:generation}, $
      s:B\Rightarrow A$ and $ t:B$, hence by the i.h. $\ve r:B$ and so $ s r:A$.
    \item Let $ t\times s\to r\times s$: If $ t\times s:A$ then by
      Lemma~\ref{lem:generation}, $A\equiv A_1\wedge A_2$, $ t:A_1$, and $ s:A_2$,
      hence by the i.h., $ r:A_1$ and so $\ve r\times s:A_1\wedge A_2\equiv A$.
    \item Let $ s\times t\to s\times r$: Analogous to previous case.
    \item Let $\pi_B( t)\to\pi_B( r)$: If $\pi_B( t):A$ then by
      Lemma~\ref{lem:generation}, $A\equiv B$ and $ t:B\wedge C$, hence by the i.h. $
      r:B\wedge C$. Therefore, $\pi_B( r):B\equiv A$.
      \qedhere
    \end{itemize}
  \end{itemize}
\end{proof}

\section{Strong Normalization}\label{sec:SN}
We now prove the strong normalization of reduction $\toreq$. \medskip

\noindent\textbf{Road-map of the proof.} We associate, as usual, a set $\interp
A$ of strongly normalizing terms to each type $A$. We then prove an adequacy
lemma stating that every term of type $A$ is in $\interp A$. Compared with the
proof for simply typed lambda-calculus with pairs our proof presents several
novelties.
\begin{itemize}
\item In simply typed lambda-calculus, proving that if $r_1$ and $r_2$ are
  strongly normalizing, then so is $r_1\times r_2$ is easy. However, like in \OC,
  in \OCe this property is harder to prove, as it requires a characterization of
  the terms equivalent to the product $r_1\times r_2$ and of all its reducts. This
  will be the first part of our proof (Lemmas~\ref{lem:eqProd}, \ref{lem:reProd}
  and Corollary~\ref{cor:prodOfSN}).

\item As usual we associate to each type $A$ a
  set $\interp{A}$ of {\it reducible} terms,
  but this definition has to take into account the
  equivalence between types. For instance,
  $r\in\interp{\tau\Rightarrow(\tau\wedge \tau)}$, if and only if,
  $r:\tau\Rightarrow(\tau\wedge\tau)$, for all $s\in\interp{\tau}$,
  $rs\in\interp{\tau\wedge\tau}$, and, moreover,
  $\pi_{\tau\Rightarrow\tau}(r)\in\interp{\tau\Rightarrow\tau}$ as
  $\tau\Rightarrow(\tau\wedge\tau)\equiv(\tau\Rightarrow\tau)\wedge(\tau\Rightarrow\tau)$
  (Definition~\ref{def:reducibility}).

\item In the strong normalization proof of simply typed lambda-calculus the
  so-called properties CR1, CR2, and CR3, the adequacy of product, and the
  adequacy of abstraction are five independent lemmas. Like in
  \cite{JayGhaniJFP95}, we have to prove these properties in a huge single
  induction (Lemma~\ref{lem:tout}).
\item In simply typed lambda-calculus, neutral terms are those which are neither
  abstractions nor pairs. The reason is that such terms can be put in any context
  without creating a redex. In our case, the applications are not always neutral.
  For example, if $r : A$, $(\lambda x^{A\wedge B} .x)r$ is not neutral. Indeed,
  if $s:B$, $(\lambda x^{A\wedge B}.x)rs\eq(\lambda x^{A\wedge B}.x)(r\times s)\re
  r\times s$. This leads us to generalize the induction hypothesis in the proof of
  the adequacy of product and of abstraction.
\end{itemize}

The set of strongly normalizing terms is written $\SN$. The size of the longest
reduction issued from $t\in\SN$ is written $|t|$. Recall that each term has a
finite number of one-step reducts (Remark~\ref{rmk:finite}).
\begin{lemma} \label{lem:eqProd} If $ r\times  s\eq^* t$ then either
  \begin{enumerate}
  \item\label{it:eqProd-sum} $ t= u\times  v$ where either
    \begin{enumerate}
    \item $ u\eq^* t_{11}\times t_{21}$ and $ v\eq^*\ve t_{12}\times t_{22}$
      with $ r\eq^* t_{11}\times t_{12}$ and $ s\eq^* t_{21}\times t_{22}$, or
    \item $ v\eq^* w\times s$ with $ r\eq^* u\times w$, or any of the three
      symmetric cases, or
    \item $ r\eq^* u$ and $ s\eq^* v$, or the symmetric case.
    \end{enumerate}
  \item\label{it:eqProd-lam} $ t=\lambda x^A. a$ and $ a\eq^* a_1\times \ve a_2$
    with $ r\eq^*\lambda x^A. a_1$ and $ s\eq^*\lambda x^A.\ve a_2$.
  \end{enumerate}
\end{lemma}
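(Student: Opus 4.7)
I would proceed by induction on the length $n$ of an equivalence chain $r\times s = u_0 \eq u_1 \eq \cdots \eq u_n = t$. The base case $n = 0$ is immediate: $t = r \times s$ fits case~1(c) with $u = r$ and $v = s$. For the inductive step, I would assume $u_{n-1}$ admits one of the characterisations stated and analyse the single step $u_{n-1}\eq t$ by cases on (i) the structural form the inductive hypothesis gives to $u_{n-1}$ (case~1 product or case~2 abstraction), (ii) whether the rewrite occurs at the head of $u_{n-1}$ or strictly inside a subterm, and (iii) which of the four equivalence rules is applied---noting that, since $\eq$ is symmetric, each rule contributes two directions.

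Internal rewrites---those happening strictly below the head of $u_{n-1}$---preserve the outer shape given by the inductive hypothesis, so the characterisation of $t$ is obtained simply by extending exactly one of the embedded $\eq^*$ chains (e.g.\ the chain describing $u$ in case~1 or the chain describing $a$ in case~2) by one additional step. Head rewrites are the only source of structural change. When $u_{n-1} = u \times v$ (case~1), only (comm), (asso), or (dist) can fire at the head: (comm) swaps $u$ and $v$, mapping each sub-case to its explicitly declared symmetric counterpart; (asso), applicable when $u$ or $v$ is itself a product, re-associates, and I regroup the witnesses $t_{ij}$ or $w$ of the current sub-case accordingly, landing in one of 1(a), 1(b), 1(c) after the re-bracketing; a head (dist) step requires $u = \lambda x^A.b_1$ and $v = \lambda x^A.b_2$, producing $t = \lambda x^A.(b_1 \times b_2)$ in case~2 with $a_1 = b_1$, $a_2 = b_2$, the chains for $r$ and $s$ being inherited from the case~1 witnesses. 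When $u_{n-1} = \lambda x^A.a$ (case~2), the only head rule that can fire is (dist), which requires $a = a_1 \times a_2$ and yields $t = \lambda x^A.a_1 \times \lambda x^A.a_2$, placed in case~1(c) using the case~2 chains $r \eq^* \lambda x^A.a_1$ and $s \eq^* \lambda x^A.a_2$ directly. Observe that (curry) can never fire at the head of a product or of an abstraction, so it contributes only internal rewrites, which are already covered.

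The main obstacle will be the bookkeeping for (asso) at the head in case~1(a), where the four witnesses $t_{11}, t_{12}, t_{21}, t_{22}$ must be redistributed into a valid new grouping; this sometimes calls for an intermediate use of (comm) or (asso) within one of the existing $\eq^*$ chains to realign the decomposition with a listed sub-case. A secondary subtlety is the interplay between case~1 and case~2 across head (dist) steps in both directions: I must be careful to extract the right witnesses on either side of the transition and to keep the binders on $x^A$ consistent, so that the $\eq^*$ chains for $r$ and $s$ continue to have the precise form demanded by the target case. Once these regrouping moves are laid out, the remaining cases are mechanical and the characterisation propagates along the whole chain.
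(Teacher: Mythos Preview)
Your single induction on the length of the $\eq^*$ chain is not enough; the paper needs a \emph{double} induction, with the primary induction on the measure $M(t)$ and only the secondary one on chain length, and the gap in your plan shows exactly why.

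Consider the head \rulelabel{(dist)} step in case~1. The inductive hypothesis tells you $u_{n-1}=u\times v$ lies in one of the sub-cases 1(a)--1(c). If you are in sub-case~1(a), you have $u\eq^* t_{11}\times t_{21}$ and $r\eq^* t_{11}\times t_{12}$, etc. Now suppose $u=\lambda x^A.b_1$ and $v=\lambda x^A.b_2$, so $t=\lambda x^A.(b_1\times b_2)$. To land in case~2 you must exhibit $r\eq^*\lambda x^A.a_1$ and $s\eq^*\lambda x^A.a_2$. But from $\lambda x^A.b_1=u\eq^* t_{11}\times t_{21}$ you need to conclude that $t_{11}$ and $t_{21}$ are each $\eq^*$ to an abstraction on $x^A$; that is, you must apply the lemma itself to the chain $t_{11}\times t_{21}\eq^* u$. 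The length of \emph{that} chain is not bounded by $n-1$, so your induction hypothesis does not apply. The same problem arises in the case~2$\to$case~1 direction: the inductive hypothesis gives $a\eq^* a_1'\times a_2'$ with $r\eq^*\lambda x^A.a_1'$, but the syntactic requirement of \rulelabel{(dist)} is that $a=c_1\times c_2$ for some possibly different $c_1,c_2$; you then need to decompose $c_1\times c_2\eq^* a_1'\times a_2'$, again a recursive call with no length bound. Your sentence ``the chains for $r$ and $s$ being inherited from the case~1 witnesses'' glosses over exactly this step.

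The paper's fix is to make the outer induction on $M(t)$: since $M(u)<M(t)$ and $M(a)<M(t)$, the lemma can be invoked on these subterms regardless of how long the auxiliary $\eq^*$ chains are. You should add this outer induction and invoke it at the two \rulelabel{(dist)} transitions.
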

\begin{proof}
  By a double induction, first on $M( t)$ and then on the length of the
  derivation of $r\times s\eq^* t$. Consider an equivalence proof $ r \times s
  \eq^* t' \eq\ve t$ with a shorter proof $ r \times s \eq^* t'$. By the second
  i.h. (induction hypothesis), the term $ t'$ has the form prescribed by the
  lemma. We consider the three cases and in each case, the possible rules
  transforming $ t'$ in $ t$.
  \begin{enumerate}
  \item Let $ r\times  s\eq^* u\times  v\eq t$. The possible
    equivalences from $ u\times  v$ are
    \begin{itemize}
    \item $ t= u'\times  v$ or $ u\times  v'$ with $ u\eq\ve
      u'$ and $ v\eq v'$, and so the term $ t$ is in case \ref{it:eqProd-sum}.
    \item Rules \rulelabel{(comm)} and \rulelabel{(asso)} preserve the conditions of case
      \ref{it:eqProd-sum}.
    \item $ t=\lambda x^A.( u'\times  v')$, with $ u=\lambda x^A.\ve
      u'$ and $ v=\lambda x^A. v'$. By the first i.h.
      (since $M( u)<M( t)$ and $M( v)<M( t)$), either
      \begin{enumerate}
      \item $ u\eq^* w_{11}\times  w_{21}$ and $ v\eq^*\ve
        w_{12}\times  w_{22}$, by the first i.h., $\ve
        w_{ij}\eq^*\lambda x^A. t_{ij}$ for $i=1,2$ and $j=1,2$, with $\ve
        u'\eq^* t_{11}\times  t_{21}$ and $ v'\eq^* t_{12}\times \ve
        t_{22}$, so $ u'\times  v'\eq^* t_{11}\times  t_{12}\times
        t_{21}\times  t_{22}$. Hence, $ r\eq^*\lambda x^A.(\ve
        t_{11}\times  t_{12})$ and $ s\eq^*\lambda x^A.( t_{21}\times
        t_{22})$, and hence the term $t$ is in case \ref{it:eqProd-lam}.
      \item $ v\eq^* w\times  s$ and $ r\eq^* u\times  w$.
        Since $ v\eq^*\lambda x^A. v'$, by the first i.h.,
        $ w\eq^*\lambda x^A. t_1$ and $ s\eq^*\lambda x^A. t_2$,
        with $ v'\eq^* t_1\times  t_2$. Hence, $ r\eq^*\lambda
        x.( u'\times  t_1)$, and hence the term $t$ is in case \ref{it:eqProd-lam}.
      \item $ r\eq^*\lambda x^A. u'$ and $ s\eq^*\lambda x^A. v$,
        and hence the term $t$ is in case \ref{it:eqProd-lam}.
      \end{enumerate} (the symmetric cases are analogous).
    \end{itemize}
  \item Let $ r\times  s\eq^*\lambda x^A. a\eq t$, with $\ve
    a\eq^* a_1\times  a_2$, $ r\eq^*\lambda x^A. a_1$, and $\ve
    s\eq^*\lambda x^A. a_2$. Hence, possible equivalences from $\lambda
    x. a$ to $ t$ are
    \begin{itemize}
    \item $ t=\lambda x^A. a'$ with $ a\eq^* a'$, hence $\ve
      a'\eq^* a_1\times  a_2$, and so the term $t$ is in case \ref{it:eqProd-lam}.
    \item $ t=\lambda x^A. u\times \lambda x^A. v$, with $ a_1\times
       a_2\eq^* a= u\times  v$. Hence, by the first i.h.
      (since $M( a)<M( t)$), either
      \begin{enumerate}
      \item $ a_1\eq^* u$ and $ a_2\eq^* v$, and so $\ve
        r\eq^*\lambda x^A. u$ and $ s\eq^*\lambda x^A. v$, or
      \item $ v\eq^* t_1\times  t_2$ with $ a_1\eq^* u\times \ve
        t_1$ and $ a_2\eq^* t_2$, and so $\lambda x^A. v\eq^*\lambda
        x. t_1\times \lambda x^A. t_2$, $ r\eq^*\lambda x^A.\ve
        u\times \lambda x^A. t_1$ and $ s\eq^*\lambda x^A. t_2$, or
      \item $ u\eq^* t_{11}\times  t_{21}$ and $ v\eq^*\ve
        t_{12}\times  t_{22}$ with $ a_1\eq^* t_{11}\times  t_{12}$
        and $ a_2\eq^* t_{21}\times  t_{22}$, and so $\lambda x^A.\ve
        u\eq^*\lambda x^A. t_{11}\times \lambda x^A. t_{21}$, $\lambda
        x. v\eq^*\lambda x^A. t_{12}\times \lambda x^A. t_{22}$, $\ve
        r\eq^*\lambda x^A. t_{11}\times \lambda x^A. t_{12}$ and $\ve
        s\eq^*\lambda x^A. t_{21}\times \lambda x^A. t_{22}$.
      \end{enumerate} (the symmetric cases are analogous), and so the term $t$
      is in case \ref{it:eqProd-sum}.
      \qedhere
    \end{itemize}
  \end{enumerate}
\end{proof}

\begin{lemma}
  \label{lem:reProd}
  If $ r_1\times r_2\eq^* s\re t$, there exists $ u_1$, $\ve u_2$ such that $
  t\eq^* u_1\times u_2$ and either ($ r_1\toreq u_1$ and $\ve r_2\eq^* u_2$), or
  ($ r_1\eq^* u_1$ and $ r_2\toreq u_2$).
\end{lemma}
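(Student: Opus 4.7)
The plan is to induct on the measure $M(s)$, using Lemma~\ref{lem:eqProd} to obtain a structural description of $s$ from $r_1\times r_2\eq^* s$, and then to case-analyse the one-step reduction $s\re t$. The key preliminary observation is that in both sub-cases of Lemma~\ref{lem:eqProd}, the term $s$ is an introduction (either a product or an abstraction), so none of the top-level reduction rules can fire: $(\beta),(\pi),(\zeta)$ require redex shapes incompatible with a bare product or bare abstraction, and $(\eta),(\delta)$ require $s$ to be an elimination or a variable. Hence $s\re t$ must be produced by a contextual-closure rule, firing inside an immediate subterm of $s$.

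In the abstraction case, $s=\lambda x^A.a$ with $a\eq^* a_1\times a_2$, $r_1\eq^*\lambda x^A.a_1$, and $r_2\eq^*\lambda x^A.a_2$. The reduction is internal, $t=\lambda x^A.a'$ with $a\re a'$, and since $M(a)<M(s)$, the inductive hypothesis applied to $a_1\times a_2\eq^* a\re a'$ yields $a'\eq^* v_1\times v_2$ with the reduction localised in one of the $a_i$. Rule \rulelabel{(dist)} then gives $t\eq\lambda x^A.v_1\times\lambda x^A.v_2$, so taking $u_1=\lambda x^A.v_1$ and $u_2=\lambda x^A.v_2$ and using the contextual closure of $\re$ under $\lambda$ transports the localisation from $a_i$ to the required $r_i$.

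In the product case, $s=u\times v$ and $t$ is either $u'\times v$ (with $u\basicre u'$) or $u\times v'$ (with $v\re v'$). I would treat the three sub-cases of Lemma~\ref{lem:eqProd}.1 separately. Sub-case~(c), where $r_1\eq^* u$ and $r_2\eq^* v$, is immediate: the localisation is already in place, so $u_1,u_2$ can be read off directly. Sub-case~(b), say $v\eq^* w\times r_2$ with $r_1\eq^* u\times w$, needs at most one associativity step plus, when the reduction is inside $v$, an inductive call on $w\times r_2\eq^* v\re v'$ justified by $M(v)<M(s)$.

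The main obstacle is sub-case~(a), where $u\eq^* t_{11}\times t_{21}$, $v\eq^* t_{12}\times t_{22}$ with $r_1\eq^* t_{11}\times t_{12}$ and $r_2\eq^* t_{21}\times t_{22}$: the two components of $u$ come one from $r_1$ and one from $r_2$, so a reduction inside $u$ must be re-attributed carefully. The induction hypothesis applied to $t_{11}\times t_{21}\eq^* u\re u'$ gives $u'\eq^* w_1\times w_2$ with the reduction localised in either $t_{11}$ or $t_{21}$; then using \rulelabel{(asso)} and \rulelabel{(comm)} I would rewrite $t\eq^*(w_1\times w_2)\times(t_{12}\times t_{22})$ as $(w_1\times t_{12})\times(w_2\times t_{22})$, so that the localisation lifts correctly to one of $r_1,r_2$. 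The symmetric variants of~(b) and~(a) and the case of a reduction inside $v$ in~(a) are analogous, and Lemma~\ref{lem:eqM} ensures that all inductive calls are made at strictly smaller values of $M$.
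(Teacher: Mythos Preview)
Your proposal is correct and follows essentially the same approach as the paper's proof: induction on the measure (the paper uses $M(r_1\times r_2)$, which equals your $M(s)$ by Lemma~\ref{lem:eqM}), invocation of Lemma~\ref{lem:eqProd} to classify $s$, the observation that no head rule fires on an introduction, and then the same case split on sub-cases (a), (b), (c) for the product shape and the inductive call on the body for the abstraction shape. Your handling of sub-case~(a) via an inner inductive call and a reassociation step is exactly what the paper does.
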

\begin{proof}
  By induction on $M( r_1\times r_2)$. By Lemma~\ref{lem:eqProd}, $ s$ is either
  a product $s_1\times s_2$ or an abstraction $\lambda x^A.a$ with the conditions
  given in the lemma. The different terms $ s$ reducible by $\re$ are $ s_1\times
  s_2$ or $\lambda x^A. a$, with a reduction in the subterm $ s_1$, $ s_2$, or $
  a$.
  
  Notice that no rule can be applied in head position. Indeed, rule nor
  \rulelabel{($\beta$)} nor \rulelabel{($\zeta$)} can apply, since $s$ is not an
  application, rule \rulelabel{($\pi$)} cannot apply since $s$ is not a
  projection, and rules \rulelabel{($\eta$)} and \rulelabel{($\delta$)} cannot
  apply since $s$ is an introduction.

  We consider each case:
  \begin{itemize}
  \item $ s= s_1\times s_2$, $ t= t_1\times s_2$ or $\ve t= s_1\times t_2$, with
    $ s_1\re t_1$ and $ s_2\re t_2$. We only consider the first case since the other
    is analogous. One of the following cases happen
    \begin{enumerate}
    \item[(a)] $ r_1\eq^* w_{11}\times w_{21}$, $ r_2\eq^*\ve w_{12}\times
      w_{22}$, $ s_1= w_{11}\times w_{12}$ and $\ve s_2= w_{21}\times w_{22}$. Hence,
      by the i.h., either $ t_1= w'_{11}\times w_{12}$ or $ t_1=\ve w_{11}\times
      w'_{12}$, with $ w_{11}\re w_{11}'$ and $ w_{12}\re w_{12}'$. We take, in the
      first case $ u_1= w_{11}'\times w_{21}$ and $ u_2=\ve w_{12}\times
      w_{22}\eq^*r_2$, in the second case $ u_1= w_{11}\times \ve
      w_{21}\eq^*r_1$ and $ u_2= w_{12}'\times w_{22}$.
    \item[(b)] We consider two cases, since the other two are symmetric.
      \begin{itemize}
      \item $ r_1\eq^* s_1\times w$ and $ s_2\eq^* w\times \ve r_2$, in which
        case we take $ u_1= t_1\times w$ and $\ve u_2= r_2$.
      \item $ r_2\eq^* w\times s_2$ and $ s_1= r_1\times w$. Hence, by the i.h.,
        either $ t_1= r'_1\times \ve w$, or $ t_1= r_1\times w'$, with $ r_1\re r_1'$
        and $ w\re w'$. We take, in the first case $ u_1= r'_1$ and $ u_2= w\times s_2$,
        and in the second case $ u_1= r_1$ and $ u_2= w'\times s_2$.
      \end{itemize}
    \item[(c)] $ r_1\eq^* s_1$ and $ r_2\eq^* s_2$, in which case we take $ u_1=
      t_1$ and $ u_2= s_2$.

    \end{enumerate}
  \item $ s=\lambda x^A. s'$, $ t=\lambda x^A. t'$, and $ s'\re\ve t'$, with $
    s'\eq^* s'_1\times s'_2$ and $ s\eq^*\lambda x^A.\ve s'_1\times \lambda x^A.
    s'_2$. Therefore, by the i.h., then there exists $ u'_1$, $ u'_2$ such that
    either ($ s'_1\eq^* u'_1$ and $s'_2\toreq u'_2$) or ($s'_1\toreq u'_1$ and
    $s'_2\eq^* u'_2$). Therefore, we take $u_1=\lambda x^A. u_1'$ and $u_2=\lambda
    x^A. u_2'$.
    \qedhere
  \end{itemize}
\end{proof}

\begin{corollary}\label{cor:prodOfSN}
  If $ r_1\in\SN$ and $ r_2\in\SN$, then $ r_1\times  r_2\in\SN$.
\end{corollary}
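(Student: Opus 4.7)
The plan is to reduce the corollary to a straightforward induction by leaning on Lemma~\ref{lem:reProd}, which pins down exactly what the reducts of a term equivalent to $r_1 \times r_2$ can look like. Before starting, I would record two auxiliary facts that follow directly from $\toreq$ being the relation $\re$ taken modulo $\eq^*$: firstly, $\SN$ is invariant under $\eq^*$, since $\eq^*$-equivalent terms have the same $\toreq$-successors; secondly, the length $|r|$ of the longest $\toreq$-reduction from $r\in\SN$ is well-defined (combining Remark~\ref{rmk:finite} with König's lemma) and is also preserved by $\eq^*$.

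I would then show $r_1 \times r_2 \in \SN$ by strong induction on $|r_1| + |r_2|$. Since $\toreq$ is finitely branching, it suffices to prove that every $\toreq$-reduct $t$ of $r_1\times r_2$ lies in $\SN$. Unfolding the definition of $\toreq$, such a reduct comes from $r_1\times r_2 \eq^* s \re s' \eq^* t$, and Lemma~\ref{lem:reProd} applied to the prefix $r_1\times r_2\eq^* s\re s'$ produces $u_1,u_2$ with $s'\eq^* u_1\times u_2$ and either ($r_1\toreq u_1$ and $r_2\eq^* u_2$) or the symmetric case. In either case one of the two lengths strictly decreases while the other is preserved, so $|u_1|+|u_2|<|r_1|+|r_2|$. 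The induction hypothesis yields $u_1\times u_2\in\SN$, and the $\eq^*$-invariance of $\SN$, applied via $t\eq^* s'\eq^* u_1\times u_2$, gives $t\in\SN$.

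The core of the proof has really been absorbed into Lemmas~\ref{lem:eqProd} and~\ref{lem:reProd}; once those are in hand, the normalisation step itself is short. The only subtlety in writing it out is the bookkeeping around the equivalence: one must make sure that the equivalence steps hidden inside a $\toreq$-step never inflate the measure, which is precisely what the $\eq^*$-invariance of the length $|{\cdot}|$ guarantees. I would expect the $\eta/\delta$ and $\zeta$ rules to cause headaches elsewhere in the normalisation development, but not here, because Lemma~\ref{lem:reProd} already rules out any of those rules firing at head position on a term equivalent to $r_1\times r_2$.
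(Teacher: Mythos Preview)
Your proposal is correct and follows essentially the same line as the paper: both arguments rest entirely on Lemma~\ref{lem:reProd} to track a reduction of $r_1\times r_2$ back to a reduction in one of the factors. The paper phrases this as extracting, from any reduction sequence out of $r_1\times r_2$, a reduction sequence out of $r_1$ or $r_2$; you phrase it as a strong induction on $|r_1|+|r_2|$, which is the same idea made explicit (and you correctly spell out the $\eq^*$-invariance bookkeeping that the paper leaves implicit).
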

\begin{proof}
  By Lemma~\ref{lem:reProd}, from a reduction sequence starting from $\ve
  r_1\times r_2$, we can extract one starting from $ r_1$, $r_2$, or both. Hence,
  this reduction sequence is finite.
\end{proof}

\begin{lemma}
  \label{lem:lamOfSN}
  If $r\in\SN$, then $\lambda x^A.r\in\SN$.
\end{lemma}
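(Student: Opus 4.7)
The strategy mirrors the one used for Corollary~\ref{cor:prodOfSN}: I show that any $\toreq$-step from $\lambda x^A.r$ induces a corresponding $\toreq$-step from $r$, so that an infinite reduction from $\lambda x^A.r$ would project to an infinite reduction from $r$, contradicting $r\in\SN$.

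The first step is to characterise the $\eq^*$-class of $\lambda x^A.r$. The rule \rulelabel{(dist)} can turn an abstraction into a product of abstractions, but only when the body itself decomposes as a product. Applying Lemma~\ref{lem:eqProd} to $u\times v\eq^*\lambda x^A.r$ (its second case) yields: whenever $\lambda x^A.r\eq^* u\times v$, there exist $r_1,r_2$ with $r\eq^* r_1\times r_2$, $u\eq^*\lambda x^A.r_1$, and $v\eq^*\lambda x^A.r_2$. Iterating this, I would conclude that every term equivalent to $\lambda x^A.r$ is either an abstraction $\lambda x^A.r'$ with $r\eq^* r'$, or a product-tree whose leaves are abstractions $\lambda x^A.r_i$ with $r\eq^* r_1\times\cdots\times r_n$ (modulo associativity and commutativity).

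The second step is the analysis of one-step reducts. Suppose $\lambda x^A.r\toreq t$, i.e.\ $\lambda x^A.r\eq^* s\re t$. By the characterisation above, every node of the top-level product structure of $s$ is an introduction. As in the proof of Lemma~\ref{lem:reProd}, none of the reduction rules applies at such head positions: \rulelabel{($\beta$)}, \rulelabel{($\zeta$)} require an application, \rulelabel{($\pi$)} requires a projection, and \rulelabel{($\eta$)}, \rulelabel{($\delta$)} require an elimination or a variable. Hence $s\re t$ must occur strictly inside one of the leaf abstractions $\lambda x^A.r_i$, i.e.\ $r_i\re r_i'$, and $t$ is equivalent to the same product-tree with $r_i$ replaced by $r_i'$. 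Correspondingly $r\toreq r'$ where $r'\eq^* r_1\times\cdots\times r_i'\times\cdots\times r_n$, and $t$ falls in the $\eq^*$-class of the analogous product-tree built over the $r_j$'s and $r_i'$.

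Feeding any infinite $\toreq$-reduction from $\lambda x^A.r$ through this construction yields an infinite $\toreq$-reduction from $r$, contradicting $r\in\SN$. The main obstacle I anticipate is the bookkeeping needed to formalise the ``product-tree'' characterisation and to make the projection of reductions precise; the conceptual content, however, is already contained in Lemma~\ref{lem:eqProd} together with the head-position argument used in the proof of Lemma~\ref{lem:reProd}.
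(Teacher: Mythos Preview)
Your proposal is correct and follows essentially the same strategy as the paper: characterise the $\eq^*$-class of $\lambda x^A.r$ as (products of) abstractions $\lambda x^A.s_i$ with $r\eq^* s_1\times\cdots\times s_n$, observe that no head reduction applies so any $\re$-step must fall inside some $s_i$, and project to obtain $r\toreq r'$. The only difference is cosmetic: the paper proves the characterisation directly by induction on the length of the $\eq^*$-derivation, whereas you invoke Lemma~\ref{lem:eqProd} (case~\ref{it:eqProd-lam}) and iterate; both routes yield the same invariant and the same conclusion.
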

\begin{proof}
  By induction on the length of the derivation we prove that if $\lambda
  x^A.r\eq^* s$, then $s=(\lambda x^A.s_1)\times\cdots\times(\lambda x^A.s_n)$,
  where $r\eq^*s_1\times\cdots\times s_n$. Thus, if $\lambda x^A.r\eq^*s\re t$,
  the reduction is in some $s_i$, thus $t\eq^*\lambda x^A.r'$ where $r\toreq r'$.
  Therefore, $\lambda x^A.r\in\SN$.
\end{proof}

\begin{lemma}
  \label{lem:IntroAppEqIntroApp}
  Let $r$ and $t$ be introductions, then if $rs\eq^*tu$, then $r\eq^*t$ and
  $s\eq^*u$.
\end{lemma}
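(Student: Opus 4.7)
The plan is to characterise the shape of every term equivalent to $rs$ when $r$ is an introduction, and then to read off the conclusion from that shape.

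\noindent\textbf{Characterisation.} I would first prove as an auxiliary lemma that if $r$ is an introduction and $rs\eq^* v$, then there exist $n\geq 1$, an introduction $r'$ with $r\eq^* r'$, and terms $s_1,\dots,s_n$ with $s\eq^* s_1\times\cdots\times s_n$, such that $v$ is literally the iterated application $(\cdots((r'\,s_1)\,s_2)\cdots)\,s_n$. The proof goes by induction on the length of the $\eq^*$-chain. The base case takes $n=1$, $r'=r$, $s_1=s$. For the inductive step, I examine which rule of Table~\ref{tab:opSemSym}, and at which position, was applied in the last step from a term already in the hypothesised shape. Any rule applied strictly inside $r'$ leaves $r'$ still an introduction, because introductions are preserved by $\eq$ (see the remark after Table~\ref{tab:opSem}). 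Any rule applied strictly inside some $s_i$ preserves the product $s_1\times\cdots\times s_n$ up to $\eq^*$. The only rule that can touch the outer application skeleton is (curry): applied at an outer position $(Y\,s_{i-1})\,s_i$ it merges $s_{i-1}$ and $s_i$ into a single product argument, decreasing $n$ by one; applied at $Y\,(q\times t)$ whenever some $s_i=q\times t$, it splits that argument in two, increasing $n$ by one. In both directions the resulting term is in the hypothesised shape, and the overall product of arguments stays $\eq^*$-equivalent to $s$ thanks to (asso).

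\noindent\textbf{Conclusion.} Applying the characterisation with $v=tu$ gives $tu=(\cdots((r'\,s_1)\,s_2)\cdots)\,s_n$. If $n\geq 2$, the outermost application would be $W\,s_n$ with $W=(\cdots(r'\,s_1)\cdots)\,s_{n-1}$ itself an application, so $t=W$ would not be an introduction, contradicting the hypothesis. Hence $n=1$ and $tu=r'\,s_1$. By syntactic uniqueness of the head-argument decomposition of an application, $t=r'$ and $u=s_1$, which yields $r\eq^* r'=t$ and $s\eq^* s_1=u$ as required.

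\noindent\textbf{Main obstacle.} The delicate part is the inductive step of the characterisation: one must verify that every one-step rewrite, at every position, preserves the ``head introduction applied to a sequence of arguments'' shape. The (curry) rule is the only one that reshapes the outer skeleton, and the main work is checking that it can only merge or split consecutive arguments in a way compatible with the hypothesised form, so that $s_1\times\cdots\times s_n$ remains $\eq^*$-equivalent to $s$ throughout.
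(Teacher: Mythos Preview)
Your proof is correct. Both you and the paper argue by induction on the length of the $\eq^*$-derivation, but you first establish a stronger invariant: every term equivalent to $rs$ (with $r$ an introduction) is literally of the form $r's_1\cdots s_n$ with $r'\eq^*r$ still an introduction and $s\eq^*s_1\times\cdots\times s_n$. The paper's direct induction, in the \rulelabel{(curry)} sub-case where $s=s_1\times s_2$ and the first step yields $v=(rs_1)s_2$, appeals to the induction hypothesis on $v\eq^*tu$ to conclude $rs_1\eq^*t$ and then derive a contradiction; but the head of $v$ is now the application $rs_1$, which is not an introduction, so the lemma's own hypotheses are not met and the induction hypothesis does not literally apply at that point. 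Your characterisation is precisely the strengthening that makes this step rigorous: after any \rulelabel{(curry)} step that merges or splits arguments, the invariant is preserved, and matching the final shape against $tu$ with $t$ an introduction forces $n=1$. The cost is a short auxiliary lemma; the benefit is that every case of the induction is fully justified.
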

\begin{proof}
  We proceed by induction on the length of the derivation $rs\eq v\eq^*tu$. So,
  the possibilities for $v$ are:
  \begin{enumerate}
  \item If $v=r's$ or $v=rs'$, with $r\eq r'$ and $s\eq s'$, the i.h.~applies.
  \item If $v$ is obtained by \rulelabel{(curry)}, then either $r=r_1r_2$, which
    is impossible since no elimination is equivalent to an introduction, or
    $s=s_1\times s_2$, and $v=rs_1s_2$, then by the i.h., we have $rs_1\eq^*t$,
    which is impossible since no elimination is equivalent to an introduction.
    \qedhere
  \end{enumerate}
\end{proof}

\begin{definition}[Reducibility]\label{def:reducibility}
  The set $\interp A$ of reducible terms of type $A$ is defined by induction on
  $m(A)$ as follows: $t\in\interp A$ if and only if $t:A$ and
  \begin{itemize}
  \item if $A\equiv\tau$, then $t\in\SN$,
  \item for all $B$, $C$, if $A\equiv B\Rightarrow C$, then for all
    $r\in\interp{B}$, $tr\in\interp C$,
  \item for all $B$, $C$, if $A\equiv B\wedge C$, then $\pi_B(t)\in\interp B$.
  \end{itemize}
\end{definition}
Note that, by construction, if $A\equiv B$, then $\interp A=\interp B$.

\begin{definition}[Neutral term]
  A term $t$ is neutral if no term of the form $tr$ or $\pi_A(t)$, can be
  $\basicr$-reduced at head position.
\end{definition}

The variables and the projections are always neutral, but, as we have discussed in
the road-map of the proof, applications are not necessarily neutral. For
example if $r : A$, then $(\lambda x^{A\wedge B} .x)r$ is not.

\begin{lemma}\label{lem:tout}
  For all types $T$, we have
  \begin{itemize}
  \item \titre{CR1} $\interp T\subseteq\SN$.
  \item \titre{CR2} If $t\in\interp T$ and $t\toreq t'$, then $t'\in\interp T$.
  \item \titre{CR3'} If $t:T$ is neutral, and for all $t'$ such that $t\basicr
    t'$, $t'\in\interp T$, we have $t\in\interp T$.
  \item \titre{Adequacy of product} If $T=A\wedge B$, then for all $r\in\interp
    A$ and $s\in\interp B$, $r\times s\in\interp{T}$.
  \item \titre{Adequacy of abstraction} If $T=A\Rightarrow B$, then for all
    $t\in\interp B$, if for all $r\in\interp A$, $t[r/x]\in\interp B$, then $\lambda
    x^A.t\in\interp{T}$.
  \end{itemize}
\end{lemma}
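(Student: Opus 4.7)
The plan is to proceed by a single well-founded induction on the measure $m(T)$ of Definition~\ref{def:measure}, which by Lemma~\ref{lem:comp} is invariant under $\equiv$ and strictly decreases when passing from $A\Rightarrow B$ or $A\wedge B$ to either component. At each $T$ I would establish CR1, CR2, CR3', adequacy of product, and adequacy of abstraction in that order, so that later items may freely invoke earlier ones at $T$ itself and all five items at types of strictly smaller measure.

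The three candidate properties are essentially routine. CR1 at $\tau$ is by definition; at $T\equiv B\Rightarrow C$ one applies $t$ to a variable $x\in\V_B$ (reducible since a variable is neutral with no $\basicr$-reducts, so CR3' at the smaller type $B$ yields $x\in\interp B$), whence $tx\in\interp C\subseteq\SN$ by the inductive CR1, and $t\in\SN$ follows; the conjunctive case uses $\pi_B(t)$ symmetrically. CR2 is a clause-by-clause check of Definition~\ref{def:reducibility} supported by subject reduction (Theorem~\ref{thm:SR}). For CR3' at $T\equiv B\Rightarrow C$, given $r\in\interp B\subseteq\SN$ one proves $tr\in\interp C$ by a secondary induction on $|t|+|r|$: neutrality of $t$ forbids a head $\basicr$-redex of $tr$, hence every basic reduct is internal, is reducible by the inner induction together with CR2, and CR3' at the smaller type $C$ concludes; the conjunctive clause is handled the same way.

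For adequacy of product, given $r\in\interp A$ and $s\in\interp B$ I would verify that $\pi_C(r\times s)\in\interp C$ for every decomposition $A\wedge B\equiv C\wedge D$. By CR1 and Corollary~\ref{cor:prodOfSN}, $r\times s\in\SN$. Since projections are neutral, CR3' at the smaller type $C$ reduces the task to showing reducibility of every basic reduct. Internal reductions are tracked using Lemma~\ref{lem:reProd}, while the head $\pi$-reduct is classified by Lemma~\ref{lem:eqProd}: in each shape that a term equivalent to $r\times s$ can take, the lemma pinpoints exactly the reducible subterm that the projection extracts, and CR2 together with the outer hypothesis closes the argument.

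The main obstacle is the adequacy of abstraction. Unlike in simply typed $\lambda$-calculus with pairs, the application $(\lambda x^A.t)r$ need not be neutral: if $A\equiv C_1\wedge C_2$, an equivalence rewriting of $r$ as a pair followed by $\zeta$- or $\beta$-reduction produces a head redex, so the standard CR3' route does not apply. One must therefore generalise the induction hypothesis and analyse the equivalences via Lemmas~\ref{lem:ImpImp} and~\ref{lem:ImpConj}, which reduce $A\Rightarrow B\equiv C\Rightarrow D$ to either ``$C\equiv A$ and $D\equiv B$'' or ``$C\equiv A\wedge B'$ with $B\equiv B'\Rightarrow D$'', and similarly reduce $A\Rightarrow B\equiv C_1\wedge C_2$ to $C_i\equiv A\Rightarrow B_i$ with $B\equiv B_1\wedge B_2$. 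In the standard case the basic reducts of $(\lambda x^A.t)r$ are either internal, handled by a secondary induction on $|t|+|r|$, or the redex $t[r/x]$ itself, reducible by hypothesis. In the mixed-cut and projection cases one must follow the $\delta$- or $\eta$-expansions prescribed by the reduction rules and rewrite the term into a product or an abstraction on a smaller type, where the adequacy of product just established at the same inductive step applies. This reciprocal dependency between the two adequacy statements, forced by the presence of $\eta$ and $\delta$, is precisely why all five properties must be proved in a single simultaneous induction.
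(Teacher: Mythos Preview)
There are two genuine gaps.

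First, your adequacy of product only checks the conjunctive clause of Definition~\ref{def:reducibility}. But $A\wedge B$ may well be equivalent to an arrow type---for instance $(\tau\Rightarrow\tau)\wedge(\tau\Rightarrow\tau)\equiv\tau\Rightarrow(\tau\wedge\tau)$---so to place $r\times s$ in $\interp{A\wedge B}$ you must also show that $(r\times s)u\in\interp D$ for every $A\wedge B\equiv C\Rightarrow D$ and every $u\in\interp C$. The term $(r\times s)u$ is not neutral (it is a head $\zeta$-redex), so CR3' does not help, and this is precisely where the paper introduces the generalisation you mention only for abstraction: a simultaneous second induction on $m(D)$ proving (1) $r\times s\in\interp D$ whenever $A\wedge B\equiv D$, and (2) $(r\times s)u\in\interp D$ whenever $A\wedge B\equiv C\Rightarrow D$ and $u\in\interp C$. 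The same two-clause generalisation is carried out for abstraction. Your remark about ``following the $\delta$- or $\eta$-expansions'' is not the mechanism that works here; the adequacy proofs analyse $\basicr$-reducts only and never invoke expansion.

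Second, CR2 is not routine in your ordering. The step $t\toreq t'$ may be a root $\eta$- or $\delta$-expansion, yielding $\lambda x^A.(tx)$ or $\pi_A(t)\times\pi_B(t)$. Showing that these lie in $\interp T$ is literally an instance of adequacy of abstraction (resp.\ product) at $T$ itself, not at a smaller type; with your order ``CR2 before the adequacies'' this is unavailable, and clause-chasing plus subject reduction does not close the case. A related oversight affects CR1: from $tx\in\SN$ you infer $t\in\SN$, but a top-level $\eta$- or $\delta$-expansion of $t$ does not induce any reduction of $tx$ (expansion is blocked in function position by the definition of $\basicre$), so those reducts of $t$ need a separate argument via $ty\in\interp D\subseteq\SN$ and $\pi_C(t)\in\interp C\subseteq\SN$, together with Lemma~\ref{lem:lamOfSN} and Corollary~\ref{cor:prodOfSN}.
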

\begin{proof}
  By induction on $m(T)$.
  
  \proofof{\titre{CR1}}
  Let $t\in\interp T$. We want to prove that $t\in\SN$.
  \begin{itemize}
  \item If $T=\tau$, then $t\in\interp T=\SN$.
  \item If $T=A\Rightarrow B$,
    then, by the i.h. \titre{CR3'}, we have $x^A\in\interp A$. Hence,
    $tx\in\interp B$, then, by the i.h.,
    $tx\in\SN$.
    We prove by a second induction on $|tx|$ that all the one-step $\toreq$-reducts of
    $t$ are in $\SN$.
    \begin{itemize}
    \item If $t\basicr t'$, then $tx\basicr t'x$, so by the second i.h., $t'\in\SN$.
    \item If $t\toreq_\eta \lambda y^C.(ty)$, where $T\equiv C\Rightarrow D$.
      Since $t\in\interp T$, and, by the i.h. \titre{CR3'}, $y\in\interp C$, so
      $ty\in\interp D$, which, by the i.h. is a subset of $\SN$. Therefore, by
      Lemma~\ref{lem:lamOfSN}, $\lambda y^C.(ty)\in\SN$.
    \item If $t\toreq_\delta\pi_C(t)\times\pi_D(t)$, where $T\equiv C\wedge
      D$. Since $t\in\interp T$, we have $\pi_C(t)\in\interp C$, and by the i.h.,
      $\pi_C(t)\in\SN$. In the same way, $\pi_D(t)\in\SN$, so by
      Corollary~\ref{cor:prodOfSN}, $\pi_C(t)\times\pi_D(t)\in\SN$.
    \end{itemize}
  \item If $T=A\wedge B$, then $\pi_A(t)\in\interp A$ and
    $\pi_B(t)\in\interp B$. by the i.h., $\interp
    A\subseteq\SN$, and so we proceed by a second induction on $|\pi_A(t)|$ to prove
    that all the one-step $\toreq$-reducts of $t$ are in $\SN$.
    \begin{itemize}
    \item If $t\basicr t'$, $\pi_A(t)\basicr\pi_A(t')$, so by the second i.h., $t'\in\SN$.
    \item If $t\toreq_\eta \lambda y^C.(ty)$, where $T\equiv C\Rightarrow D$.
      Since $t\in\interp T$, and, by the i.h.
      \titre{CR3'},
      $y\in\interp C$, so $ty\in\interp D$, which, by the i.h.
      is a subset of $\SN$.
      Therefore, by Lemma~\ref{lem:lamOfSN}, $\lambda y^C.(ty)\in\SN$.
    \item If $t\toreq_\delta\pi_C(t)\times\pi_D(t)$, where $T\equiv C\wedge D$. Since $t\in\interp T$, we have $\pi_C(t)\in\interp
      C$, and by the i.h.,
      $\pi_C(t)\in\SN$. In the same way, $\pi_D(t)\in\SN$, so by Corollary~\ref{cor:prodOfSN},
      $\pi_C(t)\times\pi_D(t)\in\SN$.
    \end{itemize}
  \end{itemize}
  
  \proofof{\titre{CR2}}
  Let $t\in\interp T$ and $t\toreq t'$. We want to prove that $t'\in\interp T$.
  Cases:
  \begin{itemize}
  \item $t\toreq_\triangle t'$. We want to prove that $t'\in\interp T$. That is, if $T\equiv\tau$, then $t'\in\SN$, if $T\equiv A\Rightarrow B$, then for all $r\in\interp A$, $t'r\in\interp B$, and if $T\equiv A\wedge B$, then $\pi_A(t')\in\interp A$.
    \begin{itemize}
    \item If $T\equiv\tau$, then since $t\in\SN$, we have $t'\in\SN$.
    \item If $T\equiv A\Rightarrow B$, then let $r\in\interp A$, we need to
      prove $t'r\in\interp B$. Since $t\in\interp T=\interp{A\Rightarrow B}$,
      we have $tr\in\interp B$. Then, by the i.h. in $\interp
      B$, and the fact that $tr\toreq_\triangle t'r$, we have $t'r\in\interp
      B$.
    \item If $T\equiv A\wedge B$, then we need to prove $\pi_A(t')\in\interp
      A$.
      Since $t\in\interp T=\interp{A\wedge B}$, we have $\pi_A(t)\in\interp
      A$. Then, by the i.h.~in $\interp A$, and the fact that $\pi_A(t)\toreq_\triangle \pi_A(t')$,
      we have $\pi_A(t')\in\interp A$.
    \end{itemize}
  \item $t\toreq_\eta \lambda x^A.tx$. Then, $T\equiv A\Rightarrow B$.
    Since $t\in\interp T=\interp{A\Rightarrow B}$, for any $s\in\interp A$, $ts\in\interp B$, and, since
    $x\notin \FV(t)$, we have $ts = (tx)[s/x]$. Then, by i.h.~\titre{Adequacy of abstraction}, $\lambda
    x^A.tx\in\interp{A\Rightarrow B}=\interp T$.
  \item $t\toreq_\delta\pi_A(t)\times\pi_B(t)$. Then, $T\equiv A\wedge B$. Since $t\in\interp T=\interp{A\wedge B}$, we have
    $\pi_A(t)\in\interp A$ and $\pi_B(t)\in\interp B$. Then, by the i.h.~\titre{Adequacy of product},
    $\pi_A(t)\times\pi_B(t)\in\interp{A\wedge B}=\interp T$.
  \end{itemize}

  \proofof{\titre{CR3'}}
  Let $t:T$ be a neutral term whose $\basicr$-one-step reducts $t'$
  are all in $\interp T$.
  We want to
  prove that $t\in\interp T$. That is, if $T\equiv\tau$, then $t\in\SN$, if $T\equiv A\Rightarrow B$, then for all $r\in\interp A$, $tr\in\interp B$, and if $T\equiv A\wedge B$, then $\pi_A(t)\in\interp A$.
  \begin{itemize}
  \item If $T\equiv\tau$, we need to prove that all the one-step reducts of
    $t$ are in $\SN$. 
    Since $T\equiv\tau$, these reducts are neither \rulelabel{($\eta$)}
    reducts nor
    \rulelabel{($\delta$)} reducts, but $\basicr$-reducts, which are in $\SN$. 
  \item If $T\equiv A\Rightarrow B$, we know that for all
    $r\in\interp A$, we have $t'r\in\interp B$. By
    the i.h.~\titre{CR1} in $\interp A$, we know $r\in\SN$. So we proceed by induction on
    $|r|$ to prove that $tr\in\interp B$. by the i.h., it
    suffices to check that every term $s$ such that
    $tr\basicr s$ is in $\interp B$.
    Since the reduction is $\basicr$, and the term $t$ is neutral, there is
    no possible head reduction. So, the possible cases are
    \begin{itemize}
    \item $s=tr'$ with $r\toreq r'$, then the i.h.~applies.
    \item $s=t'r$, with $t\toreq t'$. As $t$ cannot reduce to $t'$ by \rulelabel{($\delta$)} or
      \rulelabel{($\eta$)}, we have $t\basicr t'$, and $t'r\in\interp B$ by hypothesis.
    \end{itemize}
  \item If $T\equiv A\wedge B$, then we know that $\pi_A(t')\in\interp A$.
    by the i.h., it suffices to check that every term $s$ such that
    $\pi_A(t)\basicr s$ is in $\interp A$.
    Since the reduction is $\basicr$, and the term $t$ is neutral, there is
    no possible head reduction. So, the only possible case is
    $s=\pi_A(t')$ with $t\toreq t'$. As $t$ cannot reduce to $t'$ by \rulelabel{($\delta$)} or
    \rulelabel{($\eta$)}, we have $t\basicr t'$, and $\pi_A(t')\in\interp B$ by hypothesis.
  \end{itemize}

  \proofof{\titre{Adequacy of product}}
  If $T=A\wedge B$, we want to prove that for all $r\in\interp A$ and $s\in\interp
  B$, we have $r\times s\in\interp{T}$. We prove, more generally, by a simultaneous second induction on $m(D)$ that for all types
  $D$
  \begin{enumerate}
  \item if $T=A\wedge B\equiv D$, then $v=r\times s\in\interp{D}$, and
  \item if $T=A\wedge B\equiv C\Rightarrow D$, then for all $t\in\interp{C}$ we have
    $v=(r\times s)t\in\interp{D}$.
  \end{enumerate}

  To prove that $v\in \interp D$, we need to prove that if $D\equiv\tau$,
  then $v\in\SN$, if $D\equiv E\Rightarrow F$, then for all
  $u\in\interp E$, $vu\in\interp F$, and if $D\equiv E\wedge F$, then $\pi_{E}(v)\in\interp{E}$.

  \begin{itemize}
  \item $D\not\equiv\tau$, since, in case 1, it is equivalent to a conjunction,
    and also in case 2,
    by Lemma~\ref{lem:ImpConj}.
  \item If $D\equiv E\Rightarrow F$, in both cases we must prove that for all $u\in\interp
    E$, $vu\in\interp F$.
    \begin{enumerate}
    \item In case 1, we want to prove that $(r\times s)u\in\interp F$. Since
      $m(F)<m(D)$, the second i.h. applies.
    \item In case 2, we want to prove that $(r\times s)tu\in\interp F$.
      As $m(C\wedge E)<m((C\wedge E)\Rightarrow F)=m(T)$, by the i.h., $t\times u\in\interp{C\wedge E}$, and so, since
      $m(F)<m(D)$, by the second i.h., we have $(r\times s)(t\times
      u)\in\interp F$. Then, by the i.h. \titre{CR2}, $(r\times
      s)tu\in\interp F$.
    \end{enumerate}

  \item If $D\equiv E\wedge F$, in both cases we must prove that $\pi_E(v)\in\interp E$.
    \begin{itemize}
    \item In case 1, we want to prove that $\pi_E(r\times s)\in\interp E$.
      by the i.h. \titre{CR3'} it suffices to prove 
      that every one-step $\basicr$ reduct of $\pi_E(r\times s)$ is in $\interp E$.
      by the i.h. \titre{CR1}, $r,s\in\SN$, so we proceed with a third induction on $|r|+|s|$.

      A $\basicr$-reduction issued from $\pi_E(r\times s)$ cannot be a $\beta$-reduction or $\zeta$-reduction at head
      position, since a projection is not equivalent to an application (by rule inspection).
      Therefore, the possible $\basicr$-reductions issued from $\pi_E(r\times s)$ are:
      \begin{itemize}
      \item A reduction in $r\times s$, then, by
        Lemma~\ref{lem:reProd}, the reduction takes place either in $r$ or in $s$, and the
        third i.h.~applies.
      \item $\pi_E(r\times s)\eq^*\pi_E(w_1\times w_2)\re w_1$. Then, $r\times
        s\eq^*w_1\times w_2$. We need to prove that
        $w_1\in\interp E$.
        By Lemma~\ref{lem:eqProd}, we have either:
        \begin{itemize}
        \item $w_1\eq^*r_1\times s_1$, with $r\eq^*r_1\times r_2$ and $s\eq^*s_1\times
          s_2$. In such a case, by Lemma~\ref{lem:generation},
          $A\equiv A_1\wedge A_2$ and
          $B\equiv B_1\wedge B_2$,
          with
          $E\equiv A_1\wedge B_1$, and $F\equiv A_2\wedge B_2$. Since $r\in\interp A=\interp{A_1\wedge A_2}$, we have $\pi_{A_1}(r)\in\interp{A_1}$.
          Then, by the i.h. \titre{CR2} in $\interp{A_1}$, we have , $r_1\in\interp{A_1}$. 
          Similarly $s_1\in\interp{B_1}$. Then, by the i.h., 
          the i.h.~\titre{CR2}, $r_1\times s_1\eq^*w_1\in\interp{A_1\wedge B_1}=\interp E$.
          
        \item $w_1\eq^*r\times s_1$, with $s\eq^*s_1\times s_2$. Then, by
          Lemma~\ref{lem:generation}, $B\equiv B_1\wedge B_2$, with $E\equiv D_1$. Since $s\in\interp B=\interp{B_1\wedge B_2}$, we have $\pi_{B_1}(s)\in\interp{B_1}$.
          Then, by the i.h. \titre{CR2} in $\interp{B_1}$, we have $s_1\in\interp{B_1}$. Since,
          $r\in\interp A$, by the i.h. and the i.h.~\titre{CR2}, $r\times
          s_1\eq^*w_1\in\interp{D_1}=\interp E$.
          
        \item $w_1\eq^*r_1\times s$, with $r\eq^*r_1\times r_2$. This case is
          analogous to the previous one.

        \item $r\eq^*w_1\times r_2$, in which case, by
          Lemma~\ref{lem:generation}, $A\equiv E\wedge A_2$. since $r\in\interp A$,
          we have $\pi_E(r)\in\interp E$, so by the i.h. \titre{CR2} in $\interp E$,
          $w_1\in\interp E$.

        \item $s\eq^*w_1\times s_2$. This case is analogous to the
          previous case.

        \item $w_1\eq^*r\in\interp A=\interp E$.
        \item $w_1\eq^*s\in\interp B=\interp E$.
        \end{itemize}
      \end{itemize}
    \item In case 2, we want to prove that $\pi_E((r\times s)t)\in\interp E$.
      Since $T=A\wedge B\equiv C\Rightarrow D$, by Lemma~\ref{lem:ImpConj}, $D\equiv
      D_1\wedge D_2$, with $A\equiv C\Rightarrow D_1$ and $B\equiv C\Rightarrow D_2$.
      Since a projection is always neutral, and $m(E)<m(E\wedge F)=m(D)<m(C\Rightarrow
      D)=m(T)$, by i.h.~\titre{CR3'}, it suffices to prove that every one-step
      $\basicr$ reduction issued from $\pi_E((r\times s)t)$ is in $\interp E$. By the
      i.h. \titre{CR1}, $r,s,t\in\SN$. Therefore, we can proceed by a third induction
      on $|r|+|s|+|t|$. The reduction cannot happen at head position since a
      projection is not equivalent to an application, to apply $\beta$ or $\zeta$, and
      an application is not equivalent to a product to apply $\pi$. Hence, the
      reduction must happen in $(r\times s)t$. Therefore, we must prove that the
      one-step $\basicr$-reductions of $(r\times s)t$ are in $\interp
      D=\interp{E\wedge F}$, from which we conclude that $\pi_E((r\times
      s)t)\in\interp E$.
      
      A $\basicr$-reduction in $(r\times s)t$ cannot be a $\pi$-reduction in
      head position, since an application is not equivalent to a projection. Then, the
      possible $\basicr$ reductions issued from $(r\times s)t$ are:
      \begin{itemize}
      \item A reduction in $r\times s$, in which case, by Lemma~\ref{lem:reProd}
        it takes place either in $r$ or in $s$, and then the third i.h.~applies.
      \item A reduction in $t$, then the third i.h.~also applies.
      \item If the reduction is a $\beta$-reduction at head position, then we
        have $(r\times s)t\eq^*(\lambda x^C.w_1)w_2$. Hence, by
        Lemma~\ref{lem:IntroAppEqIntroApp}, $r\times s\eq^*\lambda x^A.w_1$ and
        $t\eq^*w_2$. By Lemma~\ref{lem:eqProd}, $r\eq^*\lambda x^C.r'$, $s\eq^*\lambda
        x^C.s'$, and $w_1\eq^*r'\times s'$. Therefore, $(r\times s)t\eq^*(\lambda
        x^C.r'\times s')t\re r'[t/x]\times s'[t/x]$. Since $(\lambda
        x^C.r')t\times(\lambda x^C.s')t \toreq^* r'[t/x]\times s'[t/x]$, by the i.h.
        \titre{CR2} in $\interp D$, it is enough to prove that $(\lambda
        x^C.r')t\times(\lambda x^C.s')t\in\interp D$. By the i.h. \titre{CR2}, since
        $r\in\interp A$ and $s\in\interp B$, we have, $r\eq^*\lambda x^C.r'\in\interp
        A=\interp{C\Rightarrow D_1}$, and $s\eq^*\lambda x^C.s'\in\interp
        B=\interp{C\Rightarrow D_2}$. Therefore, by definition, $(\lambda
        x^C.r')t\in\interp{D_1}$ and $(\lambda x^C.s')t\in\interp{D_2}$. Since
        $m(D)<m(T)$, by the i.h., we have $(\lambda x^C.r')t\times(\lambda
        x^C.s')t\in\interp{D}$.
      \item If the reduction is a $\zeta$-reduction at head position, then
        $(r\times s)t\eq^*(u_1\times u_2)w$. By Lemma~\ref{lem:IntroAppEqIntroApp},
        $r\times s\eq^*u_1\times u_2$ and $t\eq^*w$. By Lemma~\ref{lem:eqProd}, the
        possibilities are:
        \begin{itemize}
        \item $r\eq^*r_1\times r_2$, $s\eq^*s_1\times s_2$, $u_1\eq^*r_1\times
          s_1$ and $u_2\eq^*r_2\times s_2$. Then, $(u_1\times u_2)w\re_\zeta u_1w\times
          u_2w\eq^*(r_1\times s_1)w\times (r_2\times s_2)w$. By
          Lemmas~\ref{lem:generation} and~\ref{lem:ImpConj}, we have $D_1\equiv
          D_{11}\wedge D_{12}$ and $D_2\equiv D_{21}\wedge D_{22}$. So, since
          $r\in\interp{A}=\interp{C\Rightarrow D_1}=\interp{(C\Rightarrow
            D_{11})\wedge(C\Rightarrow D_{12})}$, we have $\pi_{C\Rightarrow
            D_{11}}(r)\in\interp{C\Rightarrow D_{11}}$, so, by the i.h. \titre{CR2},
          $r_1\in\interp{C\Rightarrow D_{11}}$. Similarly, $r_2\in\interp{C\Rightarrow
            D_{12}}$, $s_1\in\interp{C\Rightarrow D_{21}}$ and $s_2\in\interp{C\Rightarrow
            D_{22}}$. Therefore, by the i.h., $r_1\times s_1\in\interp{(C\Rightarrow
            D_{11})\wedge(C\Rightarrow D_{21})}=\interp{C\Rightarrow (D_{11}\wedge
            D_{21})}$, hence, by the i.h.~\titre{CR2}, we have
          $u_1\in\interp{C\Rightarrow(D_{11}\wedge D_{21})}$. Therefore,
          $u_1w\in\interp{D_{11}\wedge D_{21}}$. Similarly, $u_2w\in\interp{D_{12}\wedge
            D_{22}}$. So, by the i.h. again, $u_1w\times u_2w\in\interp{D_{11}\wedge
            D_{21}\wedge D_{12}\wedge D_{22}}=\interp D$.
        \item $s\eq^*s_1\times u_2$, $u_1\eq^*r\times s_1$. Then, $(u_1\times
          u_2)w\re_\zeta u_1w\times u_2w \eq^* (r\times s_1)w\times u_2w$. By
          Lemmas~\ref{lem:generation} and~\ref{lem:ImpConj}, we have $D_2\equiv
          D_{21}\wedge D_{22}$. So, since $s\in\interp{B}=\interp{C\Rightarrow
            D_2}=\interp{(C\Rightarrow D_{21})\wedge(C\Rightarrow D_{22})}$, we have
          $\pi_{C\Rightarrow D_{21}}(s)\in\interp{C\Rightarrow D_{21}}$, so, by the i.h.
          \titre{CR2}, $s_1\in\interp{C\Rightarrow D_{21}}$. Similarly,
          $u_2\in\interp{C\Rightarrow D_{22}}$. Therefore, by the i.h., we have that
          $r\times s_1\in\interp{(C\Rightarrow D_1)\wedge(C\Rightarrow
            D_{21})}=\interp{C\Rightarrow (D_1\wedge D_{21})}$, hence, by the
          i.h.~\titre{CR2}, $u_1\in\interp{C\Rightarrow(D_1\wedge D_{21})}$. Therefore,
          $u_1w\in\interp{D_1\wedge D_{21}}$. Similarly, $u_2w\in\interp{D_{22}}$. So, by
          the i.h.~again, $u_1w\times u_2w\in\interp{D_1\wedge D_{21}\wedge
            D_{22}}=\interp D$. The other three cases are symmetric.
        \item $r\eq^*u_1$ and $s\eq^*u_2$ or $r\eq^*u_2$ and $s\eq^*u_1$, then
          the $\zeta$-reduct of $(u_1\times u_2)w$ is $u_1w\times u_2w\eq^*rt\times st$.
          Hence, by the i.h. \titre{CR2} in $\interp{D_1}$, we have $rt\in\interp{D_1}$.
          Similarly, and $st\in\interp{D_2}$. Therefore, by the i.h., $rt\times
          st\in\interp{D_1\wedge D_2}=\interp D$.
        \end{itemize}
      \end{itemize}
    \end{itemize}
  \end{itemize}

  \proofof{\titre{Adequacy of abstraction}} If $T=A\Rightarrow B$, we want to
  prove that for all $t\in\interp B$, if for all $r\in\interp A$,
  $t[r/x]\in\interp B$, we have $\lambda x^A.t\in\interp{T}$. We prove, more
  generally, by a simultaneous second induction on $m(D)$ that for all type $D$
  \begin{enumerate}
  \item if $T=A\Rightarrow B\equiv D$, then $v=\lambda x^A.t\in\interp{D}$, and
  \item if $T=A\Rightarrow B\equiv C\Rightarrow D$, then for all
    $u\in\interp{C}$ we have $v=(\lambda x^A.t)u\in\interp{D}$.
  \end{enumerate}

  To prove that $v\in \interp D$, we need to prove that if $D\equiv\tau$, then
  $v\in\SN$, if $D\equiv E\Rightarrow F$, then for all $s\in\interp E$,
  $vs\in\interp F$, and if $D\equiv E\wedge F$, then $\pi_{E}(v)\in\interp{E}$.
  \begin{itemize}
  \item If $D\equiv\tau$, in both cases we must prove that $v\in\SN$.
    \begin{enumerate}
    \item  Case 1 is impossible, by Lemma~\ref{lem:generation}.
    \item In case 2, we have to prove that $v=(\lambda x^A.t)u\in\SN$, so it
      suffices to prove that every one-step $\basicr$ reduction issued from $(\lambda
      x^A.t)u$ is in $\SN$. by the i.h. \titre{CR1}, $t,u\in\SN$. Therefore, we can
      proceed by third induction on $|t|+|u|$. The possible $\basicr$ reductions
      issued from $(\lambda x^A.t)u$ are:
      \begin{itemize}
      \item Reducing $t$, or $u$, then the third i.h.~applies.
      \item $(\lambda x^A.t)u\toreq t[u/x]$, then, by
        Lemma~\ref{lem:generation}, we have $A\equiv C$, and by Lemma~\ref{lem:ImpImp},
        $B\equiv D$. Then, since by hypothesis $t[u/x]\in\interp B$, we have
        $t[u/x]\in\interp D=\SN$.
      \item $(\lambda x^A.t)u\toreq t[u_1/x]u_2$, with $u\eq^*u_1\times u_2$.
        Then, by Lemmas~\ref{lem:generation} and \ref{lem:ImpImp}, $C\equiv A\wedge C'$,
        and $C'\Rightarrow D\equiv B$ so, by definition of reducibility,
        $\pi_{A}(u)\in\interp{A}$ and $\pi_{C'}(u)\in\interp{C'}$. Therefore, by the
        i.h. \titre{CR2}, $u_1\in\interp{A}$ and $u_2\in\interp{C'}$.

        So, since $t[u_1/x]\in\interp{B}=\interp{C'\Rightarrow D}$, we have
        $t[u_1/x]u_2\in\interp D=\SN$.
      \item Notice that the reduction cannot be a $\zeta$-reduction in head
        position since, by $D\equiv\tau$ and so, by Lemma~\ref{lem:generation},
        $t\not\eq^*t_1\times t_2$ .
      \end{itemize}
    \end{enumerate}

  \item If $D\equiv E\Rightarrow F$, in both cases we must prove that for all $s\in\interp E$, we have
    $vs\in\interp F$.
    \begin{enumerate}
    \item In case 1, we have to prove that $(\lambda x^A.t)s\in\interp F$, which
      is a consequence of
      the second i.h., since $m(F)<m(D)$.

    \item In case 2, we have to prove that $(\lambda
      x^A.t)us\in\interp F$. Since $m(C\wedge E)<m((C\wedge E)\Rightarrow F)=m(T)$, by the i.h.~\titre{Adequacy of product}, $u\times s\in\interp{C\wedge E}$,
      then by the second i.h., since $m(F)<m(D)$, we have $(\lambda x^A.t)(u\times
      s)\in\interp F$, so, by the i.h. \titre{CR2}, $(\lambda
      x^A.t)us\in\interp F$.
    \end{enumerate}
  \item If $D\equiv E\wedge F$, in both cases we must prove that $\pi_{E}(v)\in\interp{E}$.
    \begin{enumerate}
    \item In case 1, we have to prove that $\pi_E(\lambda x^A.t)\in\interp E$.
      by the i.h. \titre{CR3'} it suffices to prove that every
      one-step $\basicr$ reduction issued from $\pi_{E}(\lambda x^A.t)$ is in
      $\interp{E}$. by the i.h. \titre{CR1}, $t\in\SN$. Therefore, we can proceed by third induction on
      $|t|$. The possible $\basicr$ reductions issued from $\pi_{E}(\lambda x^A.t)$
      are:
      \begin{itemize}
      \item A reduction in $t$, in which case, the third i.h.~applies.
      \item $\pi_E(\lambda x^A.t)\eq^*\pi_E(\lambda x^A.t_1\times\lambda
        x^A.t_2)\re\lambda x^A.t_1$. 
        By Lemmas~\ref{lem:generation} and \ref{lem:ImpConj}, $E\equiv A\Rightarrow
        E'$ and $F\equiv A\Rightarrow F'$, with $t_1:E'$ and $t_2:F'$. In addition, since
        $A\Rightarrow B\equiv T\equiv D\equiv E\wedge F\equiv A\Rightarrow (E'\wedge
        F')$, by Lemma~\ref{lem:ImpImpSamePremise}, we have
        $B\equiv E'\wedge F'$.
        Therefore, since $t[r/x]\in\interp B$, $\pi_{E'}(t[r/x])\in\interp{E'}$, by the
        i.h.~\titre{CR2}, $t_1[r/x]\in\interp{E'}$.
        We have $m(A\Rightarrow E')=m(E)<m(D)=m(T)=m(A\Rightarrow B)$, hence by the
        i.h., $\lambda x^A.t_1\in\interp{E}$.
      \end{itemize}
    \item In case 2, we have to prove that $\pi_E((\lambda x^A.t)u)\in\interp E$.
      by the i.h. \titre{CR3'} it suffices to prove that every
      one-step $\basicr$ reduction issued from $\pi_{E}((\lambda x^A.t)u)$ is in
      $\interp{E}$. by the i.h. \titre{CR1}, $t,u\in\SN$.
      Therefore, we can proceed by third induction on $|t|+|u|$. The possible
      $\basicr$ reductions issued from $\pi_{E}((\lambda x^A.t)u)$ are:
      \begin{itemize}
      \item A reduction in $t$ or in $u$, in which case, the third i.h.~applies.
      \item $\pi_{E}((\lambda x^A.t)u)\toreq\pi_{E}(t[u/x])$, hence by
        Lemmas~\ref{lem:generation} and \ref{lem:unicity}, $A\equiv C$,
        and so, by Lemma~\ref{lem:ImpImpSamePremise}, $B\equiv D\equiv E\wedge F$.
        Since $t[u/x]\in\interp B$, we have $\pi_{E}(t[u/x])\in\interp E$.
      \item $\pi_{E}((\lambda x^A.t)u)\toreq\pi_{E}(t[u_1/x]u_2)$, with
        $u\eq^*u_1\times u_2$, hence by Lemmas~\ref{lem:generation} and
        \ref{lem:unicity}, $C\equiv A\wedge C'$, with $u_1:A$ and $u_2:C'$. Therefore, by Lemma~\ref{lem:ImpImpSamePremise}, $B\equiv C'\Rightarrow(E\wedge F)$.
        Since $u\in\interp C$, we have $\pi_A(u)\in\interp A$ and
        $\pi_{C'}(u)\in\interp{C'}$. Then, by the i.h.
        \titre{CR2}, $u_1\in\interp A$ and $u_2\in\interp{C'}$. Then,
        $t[u_1/x]\in\interp B=\interp{C'\Rightarrow(E\wedge F)}$, so
        $t[u_1/x]u_2\in\interp{E\wedge F}$, so $\pi_E(t[u_1/x]u_2)\in\interp E$.
      \item $\pi_E((\lambda x^A.t)u)\toreq\pi_E((\lambda x^A.t_1)u\times(\lambda
        x^A.t_2)u)$, with $t\eq^*t_1\times t_2$. Hence, by
        Lemmas~\ref{lem:generation} and \ref{lem:unicity},
        $B\equiv B_1\wedge B_2$, with $t_1:B_1$, $t_2:B_2$.
        Since $t\in\interp{B}=\interp{B_1\wedge B_2}$, then
        $\pi_{B_i}(t)\in\interp{B_i}$, and so, by the i.h. \titre{CR2},
        $t_i\in\interp{B_i}$. In the same way, since
        $t[r/x]\in\interp B$, $t_i[r/x]\in\interp{B_i}$.
        Since $(A\Rightarrow B_1)\wedge (A\Rightarrow B_2)\equiv C\Rightarrow D$, we
        have, by Lemma~\ref{lem:ImpConj}, $D\equiv D_1\wedge D_2$, and $A\Rightarrow
        B_i\equiv C\Rightarrow D_i$.
        Then, by the i.h., $(\lambda x^A.t_1)u\in\interp{D_1}$ and
        $(\lambda x^A.t_2)u\in\interp{D_2}$. Therefore, since $m(D_1\times
        D_2)=m(D)<m(C\Rightarrow D)=m(T)$, by the i.h.
        \titre{Adequacy of product}, $(\lambda
        x^A.t_1)u\times(\lambda x^A.t_2)u\in\interp{D_1\wedge D_2}=\interp{D}=\interp{E\wedge
          F}$, so, by definition, $\pi_E((\lambda
        x^A.t_1)u\times(\lambda x^A.t_2)u)\in\interp E$.
        \qedhere
      \end{itemize}
    \end{enumerate}
  \end{itemize}
\end{proof}

We finally prove the adequacy lemma and the strong normalization theorem.

\begin{definition}[Adequate substitution]
  A substitution $\sigma$ is adequate if for all $x\in\V_A$,
  $\sigma(x)\in\interp{A}$.
\end{definition}

\begin{lemma}[Adequacy]\label{lem:adequacy}
  If $ r:A$, then for all $\sigma$ adequate, $\sigma r\in\interp A$.
\end{lemma}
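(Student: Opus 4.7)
The plan is to proceed by induction on the typing derivation of $r:A$, with $\sigma$ universally quantified so the induction hypothesis can be applied to substitutions extended in the abstraction case. There are six typing rules to handle, and for most of them the appropriate clause of Lemma~\ref{lem:tout} or of Definition~\ref{def:reducibility} does the work.

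I would begin with the syntax-directed cases. For the axiom $x:A$ with $x\in\V_A$, the result is immediate from the adequacy of $\sigma$. For the equivalence rule, note that $A\equiv B$ implies $\interp A=\interp B$, so the conclusion is preserved. For application, if $r_1 r_2:B$ with $r_1:A\Rightarrow B$ and $r_2:A$, the induction hypothesis gives $\sigma r_1\in\interp{A\Rightarrow B}$ and $\sigma r_2\in\interp A$, and Definition~\ref{def:reducibility} concludes $(\sigma r_1)(\sigma r_2)\in\interp B$. The product case follows by the \textbf{Adequacy of product} clause of Lemma~\ref{lem:tout} applied to the induction hypotheses on the two subterms, and the projection case follows directly from the second clause of Definition~\ref{def:reducibility}.

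The main case is abstraction, which I expect to be the only delicate one. Given $\lambda x^A.r':A\Rightarrow B$ with $r':B$, I want to invoke the \textbf{Adequacy of abstraction} clause of Lemma~\ref{lem:tout} with $t=\sigma r'$. This requires two things: first, that $\sigma r'\in\interp B$; second, that for every $s\in\interp A$, $(\sigma r')[s/x]\in\interp B$. For the second, taking $s\in\interp A$ and extending $\sigma$ with $x\mapsto s$ produces an adequate substitution $\sigma'$, and $\sigma' r'=(\sigma r')[s/x]$ lies in $\interp B$ by the induction hypothesis (one must arrange $x$ to be fresh for $\sigma$ by $\alpha$-conversion). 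For the first, the variable $x$ itself belongs to $\interp A$ by the \textbf{CR3'} clause of Lemma~\ref{lem:tout} (since $x$ is neutral with no $\basicr$-reducts), so extending $\sigma$ with $x\mapsto x$ is adequate and yields $\sigma r'\in\interp B$ again by the induction hypothesis.

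The hard part is bookkeeping rather than mathematics: ensuring that $x\notin\mathrm{dom}(\sigma)$ after $\alpha$-conversion so that $(\sigma r')[s/x]$ really equals $\sigma'r'$, and checking that the variable $x$ is reducible so the extended substitution is adequate, which in turn rests on Lemma~\ref{lem:tout} being fully available. Once this is set up correctly, the abstraction case reduces to two invocations of the induction hypothesis combined with one invocation of the adequacy of abstraction clause, and the remainder of the proof is routine.
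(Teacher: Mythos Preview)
Your proposal is correct and essentially matches the paper's argument. The only structural difference is that the paper inducts on the term $r$ (invoking the Generation Lemma to recover typing information for subterms, and never needing a separate $(\equiv)$ case because $\interp A=\interp B$ is used implicitly), whereas you induct on the typing derivation and handle $(\equiv)$ explicitly; both routes are standard and equivalent here. Your abstraction case is in fact slightly more careful than the paper's: you spell out that $\sigma r'\in\interp B$ is required by the hypothesis of the \textbf{Adequacy of abstraction} clause and obtain it by extending $\sigma$ with $x\mapsto x$ via \textbf{CR3'}, a step the paper leaves implicit.
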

\begin{proof}
  By induction on $r$. 
  \begin{itemize}
  \item If $r$ is a variable $x\in\V_A$, then, since $\sigma$ is adequate, we
    have $\sigma r\in\interp A$.
    
  \item If $r$ is a product $s\times t$, then by Lemma~\ref{lem:generation},
    $s:B$, $t:C$, and $A\equiv B\wedge C$, then by the i.h., $\sigma s\in\interp B$
    and $\sigma t\in\interp C$. By Lemma~\ref{lem:tout} (adequacy of product),
    ${(\sigma s\times\sigma t)}\in\interp{B\wedge C}$, hence, $\sigma r\in\interp
    A$.
    
  \item If $r$ is a projection $\pi_A(s)$, then by Lemma~\ref{lem:generation},
    $s:A\wedge B$, and by the i.h., $\sigma s\in\interp{A\wedge B}$. Therefore,
    $\sigma(\pi_A(s))=\pi_A(\sigma s)\in\interp A$.
    
  \item If $r$ is an abstraction $\lambda x^B.s$, with $s:C$, then by
    Lemma~\ref{lem:generation}, $A\equiv B\Rightarrow C$, hence by the i.h., for all
    $\sigma$, and for all $t\in\interp B$, $(\sigma s)[t/x]\in\interp C$. Hence, by
    Lemma~\ref{lem:tout} (adequacy of abstraction), ${\lambda x^B.\sigma
      s}\in\interp{B\Rightarrow C}$, hence, $\sigma r\in\interp A$.
    
  \item If $r$ is an application $st$, then by Lemma~\ref{lem:generation},
    $s:B\Rightarrow A$ and $t:B$, then by the i.h., $\sigma s\in\interp{B\Rightarrow
      A}$ and $\sigma t\in\interp B$. Then $\sigma(st)=\sigma s\sigma t\in\interp A$.
    \qedhere
  \end{itemize}
\end{proof}

\begin{theorem}[Strong normalization]\label{thm:SN}
  If $ r:A$, then $ r\in\SN$.
\end{theorem}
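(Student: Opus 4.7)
The plan is to combine the adequacy lemma (Lemma~\ref{lem:adequacy}) with property \titre{CR1} from Lemma~\ref{lem:tout}. The key observation is that the identity substitution $\sigma$, defined by $\sigma(x) = x$ for every variable, is itself adequate: this reduces the problem to showing that each variable $x \in \V_A$ belongs to $\interp A$.

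To establish that variables are reducible, I would invoke \titre{CR3'}. A variable $x$ is neutral by definition (as noted after Definition~\ref{def:reducibility}, ``variables and projections are always neutral''), and moreover $x$ has no $\basicre$-reducts at all, since the only reduction rules applicable to a variable are \rulelabel{($\eta$)} and \rulelabel{($\delta$)}, which are excluded from $\basicre$. Hence the hypothesis of \titre{CR3'} (``for all $t'$ such that $x \basicr t'$, $t' \in \interp A$'') is vacuously satisfied, so $x \in \interp A$.

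With adequacy of the identity substitution in hand, Lemma~\ref{lem:adequacy} applied to $r : A$ yields $r = \sigma r \in \interp A$. Finally, \titre{CR1} gives $\interp A \subseteq \SN$, whence $r \in \SN$ as required.

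There is essentially no obstacle: the heavy lifting was done in Lemma~\ref{lem:tout} (the huge simultaneous induction covering \titre{CR1}--\titre{CR3'} together with adequacy of products and abstractions) and in Lemma~\ref{lem:adequacy}. The only subtlety worth stating explicitly is the verification that variables have no $\basicre$-reducts, which justifies the vacuous application of \titre{CR3'}; after that, the theorem is a three-line corollary.
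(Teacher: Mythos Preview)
Your proposal is correct and follows exactly the same route as the paper's proof: use \titre{CR3'} to see that every variable $x^B$ lies in $\interp B$ (vacuously, since variables are neutral and have no $\basicr$-reducts), conclude that the identity substitution is adequate, and then combine Lemma~\ref{lem:adequacy} with \titre{CR1}. The only minor remark is that \titre{CR3'} is stated for $\basicr$ (i.e., $\basicre$ modulo $\eq^*$), so strictly speaking you also need the observation---made in the paper just before Table~\ref{tab:opSem}---that variables are preserved by $\eq$, which ensures $x$ has no $\basicr$-reducts either.
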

\begin{proof}
  By Lemma~\ref{lem:tout} \titre{CR3'}, for all type $B$, $x^B\in\interp B$, so
  the identity substitution is adequate. Thus, by Lemma~\ref{lem:adequacy} and
  Lemma~\ref{lem:tout} \titre{CR1}, $r\in\interp{A}\subseteq\SN$.
\end{proof}

\section{Consistency}\label{sec:cons}
We say that a term is $\basicr$-normal whenever it cannot continue reducing by
relation $\basicr$, that is, a term that cannot be $\beta$, $\pi$, or
$\zeta$-reduced, but may be expanded by rules $\eta$ or $\delta$.

\begin{lemma}\label{lem:consistencyOfProd}
  If $r:A\wedge B$ is closed $\basicr$-normal, then $r\eq^* r_1\times r_2$, with
  $r_1:A$ and $r_2:B$.
\end{lemma}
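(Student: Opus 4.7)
I would prove the lemma by induction on the size of $r$, case-splitting on its top-level constructor and exploiting the observation that, because contextual closure admits the full relation $\re$ (and hence $\re_{\eta\delta}$) under $\lambda$, on both sides of $\times$, and on the right of an application, the $\basicr$-normality of $r$ strongly restricts the shape of its subterms: no subterm appearing in such a position can be a variable or an elimination of conjunction or implication type. The variable case is impossible since $r$ is closed, leaving four cases.

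The two elimination cases are ruled out by contradiction. If $r = \pi_C(s)$, then Lemma~\ref{lem:generation} gives $s : (A\wedge B)\wedge D$ for some $D$, and $s$ inherits closedness and $\basicr$-normality from $r$; the induction hypothesis yields $s\eq^* s_1\times s_2$ with $s_1 : A\wedge B$, whence $r\eq^*\pi_C(s_1\times s_2)\re s_1$ by \rulelabel{($\pi$)}, contradicting $\basicr$-normality. If $r = st$, then $s : C\Rightarrow(A\wedge B)$ is closed and $\basicr$-normal; it cannot be an abstraction nor a product (else a $\beta$- or $\zeta$-redex would be exposed, even up to $\eq^*$), so $s$ must itself be an elimination. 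Iterating this reasoning up the spine of applications, we reach at its head a projection $\pi_E(u)$ with $u$ closed, $\basicr$-normal, and of conjunction type; applying the induction hypothesis to $u$ exposes a $\basicr$-redex inside $r$, again a contradiction.

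If $r = r_1\times r_2$, Lemma~\ref{lem:generation} yields $r_1 : C$, $r_2 : D$ with $A\wedge B\equiv C\wedge D$, and Lemma~\ref{lem:eqConjN} supplies seven possible decomposition patterns for this equivalence. In each pattern I invoke the induction hypothesis on the closed $\basicr$-normal subterms $r_1$ and $r_2$ when a sub-conjunction appears, and reshuffle the resulting factors via \rulelabel{(comm)} and \rulelabel{(asso)} so that the two outermost components carry the types $A$ and $B$ exactly. If $r = \lambda x^C.s$, Lemmas~\ref{lem:generation} and~\ref{lem:ImpConj} give $A\equiv C\Rightarrow A'$, $B\equiv C\Rightarrow B'$, and $s : A'\wedge B'$. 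Because $s$ sits under a $\lambda$, the $\basicr$-normality of $r$ forbids $s$ from being a variable or an elimination of conjunction type, so $s$ is either a product (handled via the product case) or an abstraction (recurse). In either sub-case we obtain $s\eq^* s_1\times s_2$ with $s_1 : A'$ and $s_2 : B'$, and \rulelabel{(dist)} gives $r\eq^*\lambda x^C.s_1\times\lambda x^C.s_2$ with the required types.

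The principal obstacle is that in the abstraction case the body $s$ is not closed, and so the lemma as literally stated cannot be invoked on $s$ directly. To overcome this I would first prove a strengthened version that holds for every $\basicr$-normal term of conjunction type which is not a variable and not an elimination — a condition the contextual-closure argument guarantees for every subterm encountered during the recursion — and then recover the original statement by observing that a closed $\basicr$-normal term of conjunction type is never an elimination. The combinatorial checks in the product case across the seven sub-cases of Lemma~\ref{lem:eqConjN} are tedious but require no new idea.
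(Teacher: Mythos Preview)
Your overall plan---induction on a size measure, case split on the head constructor, Lemma~\ref{lem:eqConjN} for the product case, Lemma~\ref{lem:ImpConj} plus \rulelabel{(dist)} for the abstraction case---is exactly the paper's. You also spot something the paper glosses over: in the case $r=\lambda x^C.s$ the body $s$ need not be closed, so the induction hypothesis as literally stated does not apply to it. Your proposed repair (strengthen the statement to $\basicr$-normal \emph{introductions} of conjunction type, and note that the contextual closure propagates $\re_{\eta\delta}$ under $\lambda$ and under $\times$, so any variable or elimination of conjunction type sitting there would $\delta$-expand) is correct and is the clean way to make the paper's induction go through.

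Where your argument slips is the application case. You claim that if $r=st$ with $s$ an abstraction, ``a $\beta$-redex would be exposed, even up to $\eq^*$''. This is not immediate: take $s=\lambda x^{\tau}.s'$ with $s':\tau\Rightarrow(\tau\wedge\tau)$ and $t:\tau\wedge\tau$; then $s:(\tau\wedge\tau)\Rightarrow(\tau\wedge\tau)$, $r=st:\tau\wedge\tau$, but $t$ does not have type $\tau$, so no $\beta$-step fires at the head. Your spine-chasing then rests on an unjustified step. The paper's treatment of this case is both simpler and avoids the issue entirely: since $s:C\Rightarrow(A\wedge B)\equiv(C\Rightarrow A)\wedge(C\Rightarrow B)$ is itself closed, $\basicr$-normal, and of conjunction type, the induction hypothesis applies \emph{directly to $s$}, giving $s\eq^* s_1\times s_2$; then $r\eq^*(s_1\times s_2)t$ exposes a $\zeta$-redex. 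No analysis of whether $s$ is an abstraction, and no walk up the spine, is needed. (Your argument can be rescued---in the example above one eventually finds a $\zeta$-redex after using the strengthened lemma on $s'$ and \rulelabel{(dist)}---but this detour is unnecessary.)
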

\begin{proof}
  We proceed by induction on $M(r)$.
  \begin{itemize}
  \item $r$ cannot be a variable, since it is closed.
  \item If $r=u\times v$, then by Lemma~\ref{lem:generation}, $u:C$, $v:D$, and
    $C\wedge D\equiv A\wedge B$. Then, by Lemma~\ref{lem:eqConjN}, one of the
    following cases happens
    \begin{itemize}
    \item $A\equiv C_1\wedge D_1$ and $B\equiv C_2\wedge D_2$, with $C\equiv
      C_1\wedge C_2$ and $D\equiv D_1\wedge D_2$. Then, by the i.h., $u\eq^*u_1\times
      u_2$ with $u_1:C_1$ and $u_2:C_2$, and $v\eq^*v_1\times v_2$ with $v_1:D_1$ and
      $v_2:D_2$. So, take $r_1=u_1\times v_1$ and $r_2=u_2\times v_2$.
    \item $B\equiv C\wedge D_2$, with $D\equiv A\wedge D_2$. Then, by the i.h.,
      $v\eq^*v_1\times v_2$. Take $r_1=v_1$ and $r_2=u\times v_2$. Three other cases
      are symmetric.
    \item $A\equiv C$ and $B\equiv D$, take $r_1=u$ and $r_2=v$. The last case
      is symmetric.
    \end{itemize}
  \item If $r=\lambda x^C.r'$, then, by Lemma~\ref{lem:generation}, $A\wedge
    B\equiv C\Rightarrow D$, and so, by Lemma~\ref{lem:ImpConj}, $D\equiv D_1\wedge
    D_2$, with $A\equiv C\Rightarrow D_1$ and $B\equiv C\Rightarrow D_2$. Hence, by
    the i.h., $r'\eq^*r'_1\times r'_2$ with $r'_1:D_1$ and $r'_2:D_2$. Therefore,
    $r\eq^*(\lambda x^C.r'_1)\times (\lambda x^C.r'_2)$, with $\lambda
    x^C.r'_1:C\Rightarrow D_1\equiv A$ and $\lambda x^C.r'_2:C\Rightarrow D_2\equiv
    B$.
  \item If $r=r_1r_2$, then by Lemma~\ref{lem:generation}, $r_1:C\Rightarrow
    A\wedge B\equiv (C\Rightarrow A)\wedge(C\Rightarrow B)$, so, by the i.h.,
    $r_1\eq^* s\times t$, and so $(s\times t)r_2\re sr_2\times tr_2$, so $r$ is not
    $\basicr$-normal.
  \item If $r=\pi_{A\wedge B}(r')$, then, by Lemma~\ref{lem:generation},
    $r':A\wedge B\wedge C$, so, by the i.h., $r'\eq^* s_1\times s_2$, with
    $s_1:A\wedge B$, and so $r$ is not $\basicr$-normal.
    \qedhere
  \end{itemize}
\end{proof}

\begin{theorem}[Consistency]\label{thm:consistency}
  There is no closed term in normal form of type $\tau$.
\end{theorem}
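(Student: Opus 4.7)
I argue by contradiction: let $r$ be a closed normal term of type $\tau$; in particular, $r$ is $\basicr$-normal. By comparing prime-factor multisets (Lemma~\ref{lem:eqPF}), no implication or conjunction type is equivalent to $\tau$ (since $\PF{\tau}=[\tau]$ has a single element, while every prime factor of an implication has non-empty premise and any conjunction has at least two prime factors). Hence by Lemma~\ref{lem:generation}, $r$ is neither an abstraction nor a product; being closed, it is not a variable; so $r$ must be an elimination. If $r=\pi_{\tau}(r')$, then $r':\tau\wedge B$ is closed and (as a subterm of a normal term) $\basicr$-normal, so Lemma~\ref{lem:consistencyOfProd} yields $r'\eq^* u\times v$ with $u:\tau$, producing the $\pi$-redex $\pi_{\tau}(u\times v)\basicre u$ and contradicting the normality of $r$.

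It remains to rule out $r$ being an application. I would write $r=t_0 t_1\cdots t_k$ with $k\geq 1$ and $t_0$ not itself an application. Iterated use of Lemma~\ref{lem:generation} gives $t_i:A_i$ and $t_0:A_1\Rightarrow\cdots\Rightarrow A_k\Rightarrow\tau$, equivalently $t_0:(A_1\wedge\cdots\wedge A_k)\Rightarrow\tau$; iterating the (curry) equivalence at the term level then rewrites $r\eq^* t_0\,u$, where $u:=t_1\times\cdots\times t_k$ is closed, $\basicr$-normal, and of type $A_1\wedge\cdots\wedge A_k$. The head $t_0$ cannot be a variable (closed); it cannot be a product (else $r$ exhibits a $\zeta$-redex); and it cannot be a projection $\pi_C(s)$, since then Lemma~\ref{lem:consistencyOfProd} applied to the closed $\basicr$-normal term $s$ of conjunction type would yield a $\pi$-redex in $t_0$ and hence in $r$. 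So $t_0$ is necessarily an abstraction $\lambda x^D.s'$.

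To conclude, I apply Lemmas~\ref{lem:unicity} and~\ref{lem:ImpImp} to $D\Rightarrow E\equiv (A_1\wedge\cdots\wedge A_k)\Rightarrow\tau$: either $D\equiv A_1\wedge\cdots\wedge A_k$, whence $u:D$ by rule $(\equiv)$ and $t_0\,u\basicre s'[u/x]$ is the desired $\beta$-redex; or $A_1\wedge\cdots\wedge A_k\equiv D\wedge B'$, in which case $u$ has conjunction type $D\wedge B'$, Lemma~\ref{lem:consistencyOfProd} splits it as $u\eq^* u_1\times u_2$ with $u_1:D$, and one more (curry) step rewrites $t_0\,u\eq^* t_0\,u_1\,u_2$, where the $\beta$-redex $t_0\,u_1\basicre s'[u_1/x]$ fires. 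Either way $r$ admits a $\basicr$-reduction, contradicting its normality. The crux lies in this application case; the iterated-(curry) trick avoids any induction on $k$, reducing the $\beta$-matching to one further use of Lemma~\ref{lem:consistencyOfProd} in the worst case.
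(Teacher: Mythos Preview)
Your proof is correct and follows essentially the same approach as the paper's: case analysis on the shape of the closed normal term, with the application case handled by currying the arguments into a single product $u$, analysing the (non-application) head $t_0$, and in the abstraction subcase invoking Lemma~\ref{lem:ImpImp} followed by Lemma~\ref{lem:consistencyOfProd} to expose a $\beta$-redex. The only cosmetic difference is that you justify $A\Rightarrow B\not\equiv\tau$ and $A\wedge B\not\equiv\tau$ explicitly via prime factors, whereas the paper leaves this implicit in the appeal to Lemma~\ref{lem:generation}.
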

\begin{proof}
  Consider a closed term in normal form $r$ of type $\tau$.
  \begin{itemize}
  \item If $r$ is a variable, it is not closed.
  \item If $r$ is an abstraction or a product, then by
    Lemma~\ref{lem:generation}, it does not have type $\tau$.
  \item If $r$ is a projection $r=\pi_\tau(r')$, then, by
    Lemma~\ref{lem:generation}, $r':\tau\wedge A$. Hence, since $r$ is in normal
    form, $r'$ is $\basicr$-normal, so, by Lemma~\ref{lem:consistencyOfProd},
    $r'\eq^*r_1\times r_2$ with $r_1:\tau$, hence $r$ is not in normal form.
  \item If $r$ is an application, $r=st_1\dots t_n$, with $n\geq 1$, and
    $s\not\eq^*s_1s_2$, then let $t=t_1\times\cdots\times t_n$, so we have
    $r\eq^*st$, and consider the cases for $s$.
    \begin{itemize}
    \item $s$ cannot be a variable, since the term is closed.
    \item $s$ cannot be an abstraction $\lambda x^C.s'$, since, by
      Lemmas~\ref{lem:generation} and \ref{lem:ImpImp}, $t:C$, or $t:C\wedge D$. In
      the first case, the term $r$ is a $\beta$-redex, hence it is not in normal form,
      in the second case, we have that since $r$ and $t$ are in normal form, so it is
      also $\basicr$-normal, and by Lemma~\ref{lem:consistencyOfProd}, $t\eq^*u\times
      v$, with $u:C$, so $r\eq^* (\lambda x^C.s')uv$, which contains a $\beta$-redex.
    \item $s$ cannot be an application, by hypothesis.
    \item $s$ cannot be a product, since $st$ would be a $\zeta$-redex.
    \item $s$ cannot be a projection $\pi_A(s')$, since in such a case, by
      Lemma~\ref{lem:generation}, $s':A\wedge B$, and it would be $\basicr$-normal,
      so, by Lemma~\ref{lem:consistencyOfProd}, $s'\eq^*s_1\times s_2$ with $s_1:A$,
      and so, $r$ would contain a $\pi$-redex.
      \qedhere
    \end{itemize}
  \end{itemize}
\end{proof}

Note that, in the proof of Theorem \ref{thm:consistency}, we need
Lemma \ref{lem:consistencyOfProd} to handle the case of the
projection, but no analogous lemma for implication is needed, as both
$(\lambda x^A . s) t$ and $(s_1 \times s_2) t$ can be reduced.

\begin{theorem}[Introduction property]\label{thm:intro}
    ~
\begin{itemize}
\item  If $r$ is a closed term in normal form
of type $A \Rightarrow B$, then $r$ is an introduction.

\item  If $r$ is a closed term in normal form
of type $A \wedge B$, then $r$ is an introduction.
\end{itemize}
\end{theorem}

\begin{proof}
If $r$ has type $A \Rightarrow B$ and it is not an introduction
then it can be $\eta$-expanded and it is not normal.
If $r$ has type $A \wedge B$ and it is not an 
introduction it can be $\delta$-expanded and it is not normal.
\end{proof}

\begin{corollary}
  If $r:A$ is a closed term in normal form, then $r$ is an introduction.
\end{corollary}

\begin{proof}
  Since $r$ is a closed term in normal form, by
  Theorem~\ref{thm:consistency}, $A\neq\tau$. Hence $A=B\Rightarrow C$
  or $A=B\wedge C$. We conclude with Theorem \ref{thm:intro}.
\end{proof}

\section{Conclusion}\label{sec:conclusion}
We have proposed a calculus, \OCe, where conjunction is associative
and commutative, where implication distributes over conjunction and
where currified and uncurrified proofs are equated. In this calculus,
reduction is non-deterministic, but it enjoys termination and subject
reduction: a proof of a proposition $A$ always reduces to a proof of
this same proposition.

Compared with \OC, a first version of this calculus without the
extensionality rules, \OCe, also enjoys the introduction property, and
abstractions are not restricted to prime types.  This means that,
unlike in simply typed lambda-calculus, where the $\eta$-rule can be
considered or not, when isomorphic types are equated, this rule seems
to be mandatory to unblock terms like $(\lambda x^A.\lambda y^B.x)r$,
where $r:B$, $(\lambda x^{A\wedge B}.x)r$, where $r:A$, or
$\pi_{A\Rightarrow B}(\lambda x^A.r)$, where $r:B\wedge C$.

In this preliminary work, we consider implication and conjunction only.
This system needs to be extended to other connectives and quantifiers
of predicate logic and, possibly, to more complex systems, such as
dependent type theory. A first step in this direction is the
Polymorphic \OC \cite{PSI},
that adds the universal quantifiers at the level of types to \OC.

Yet, with these two connectives the proofs are more complex than for
simply typed lambda-calculus, but the work on Polymorphic \OC shows
that they scale, at least for the case of the universal quantifiers at
the level of types.

Finally, we have addressed in this paper the syntactic properties of
\OCe only.  The construction of a model for this system is left for
future work.

\section*{Acknowledgements}

The authors would like to thank Jean-Baptiste Joinet for useful comments and discussions.

\bibliographystyle{abbrvnat}
\bibliography{biblio}

\end{document}